\documentclass{article} 
\usepackage{graphicx} % Required for inserting images

\usepackage[T1]{fontenc}
\usepackage[utf8]{inputenc} 
\usepackage{amssymb}
\usepackage{amsmath}
\usepackage{amsthm}
\usepackage{doi} 
\usepackage{braket}
\usepackage{hyperref}
\newtheorem{theorem}{Theorem}[]

\newtheorem{lemma}[theorem]{Lemma}
\newtheorem{remark}{Remark}[]

\DeclareMathOperator{\spec}{spec}

\usepackage{soul}
\usepackage{pgfplots}
\pgfplotsset{compat=1.18}

\usepackage{graphicx} %
\usepackage[left=25mm,right=25mm,top=25mm,bottom=25mm]{geometry}

\newcommand{\Tr}{\mathrm{Tr}}
\newcommand{\C}{\mathbb{C}}
\newcommand{\R}{\mathbb{R}}

\makeatletter
\let\saved@setaddresses\@setaddresses %
\let\@setaddresses\relax              %
\makeatother   

\usepackage[baseline]{euflag}
\usepackage[shortlabels]{enumitem}

\title{Convergence rates for the $k$-local Hamiltonian problem}
\title{Convergence rates for semidefinite hierarchies for $k$-local Hamiltonians}
\title{Quantitative semidefinite certificates for ground-state energies of Pauli Hamiltonians}
\author{Igor Klep\thanks{University of Ljubljana, Faculty of Mathematics and Physics, Jadranska 21, 1000 Ljubljana \&
University of Primorska, Faculty of Mathematics, Natural Sciences and Information Technologies,
Glagoljaška 8, 6000 Koper, Slovenia. Email: \texttt{igor.klep@fmf.uni-lj.si}} \and Nando Leijenhorst\thanks{Universit\'{e} de Toulouse; LAAS-CNRS, 7 avenue du colonel Roche, F-31400 Toulouse, France. Email: \texttt{nando.leijenhorst@laas.fr}} \and Victor Magron\thanks{Universit\'{e} de Toulouse; LAAS-CNRS, 7 avenue du colonel Roche, F-31400 Toulouse, France. Email: \texttt{victor.magron@laas.fr}}}
\date{\today}

\begin{document}

\maketitle % uncomment for arXiv version

\begin{abstract}
The $k$-local Hamiltonian problem is a central model for quantum many-body systems and Hamiltonian complexity.
Semidefinite programming and noncommutative sum-of-squares hierarchies provide systematic certificates for ground-state energies, but existing finite-convergence results give no quantitative guarantee on the accuracy of the low hierarchy levels accessible in computation. We prove explicit finite-level convergence rates for these hierarchies in the Pauli setting. 
For $k$-local Hamiltonians whose Pauli expansion contains only even-weight terms, we show that both the NPA-type lower-bound hierarchy and the upper-bound hierarchy on the spectral minimum have error at most
% \[
  $C(k)\xi^{n,4}_{d+1}/n$,
% \]
where $\xi^{n,4}_{d+1}$ is the smallest root of a Krawtchouk polynomial and $C(k)$ is independent of the number of qubits $n$ and the hierarchy level $d$. General $k$-local Hamiltonians reduce to this even-weight case by adding one ancilla qubit while preserving the spectrum. The proof constructs almost-reproducing kernels for the Pauli algebra and relates their spectra to Krawtchouk polynomials, giving a noncommutative analogue of recent kernel-based convergence analyses for commutative polynomial optimization. 
These results provide the first quantitative finite-level accuracy guarantees for noncommutative semidefinite relaxations of Pauli Hamiltonians.
\end{abstract}

\section{Introduction}

Certifying the ground-state energy of a many-body quantum system is a basic task in quantum information \cite{kempe_complexity_2006,GHLS15}, condensed-matter physics \cite{BF04,WSFLRMA24}, and quantum chemistry \cite{WBAG11,Mazziotti}. For a system of $n$ qubits with local interactions, this task is captured by the $k$-local Hamiltonian problem: given a Hamiltonian
\begin{equation}\label{eq:k-local_hamiltonian}
H = \sum_{\substack{S \subseteq \{1, \dots, n\} \\ |S| \leq k}} H_S,
\end{equation}
where $H_S$ acts nontrivially only on the qubits indexed by $S$,
estimate or certify the smallest eigenvalue of $H$.
Since the $k$-local Hamiltonian problem is QMA-hard already for fixed locality $k\geq 2$~\cite{kempe_complexity_2006,GHLS15}, rigorous approximation and relaxation methods are indispensable.

A large body of work therefore studies tractable approximation methods for local Hamiltonians and important special cases such as Quantum Max Cut, a $2$-local problem without $1$-local terms that is itself QMA-hard \cite{piddock_complexity_2017}. Variational methods, including restrictions to product states~\cite{gharibian_approximation_2012,BrandaoHarrow2016} or more general Ans\"atze~\cite{king_improved_2023}, produce feasible states and hence upper bounds on the smallest eigenvalue. In this paper, we focus on the complementary certification problem: obtaining rigorous lower bounds through semidefinite programming and noncommutative sum-of-squares relaxations.

Semidefinite programming relaxations and noncommutative sum-of-squares methods~\cite{HM04,NPAog,NPA2008,burgdorf_optimization_2016} produce certified lower bounds on the ground-state energy through a hierarchy of tractable relaxations. 
These hierarchies are complete in the sense that they converge at sufficiently high levels. However, completeness alone does not answer the computationally relevant question: how accurate is a fixed, low-level relaxation?
Without an explicit rate, a hierarchy may be theoretically convergent but provide no certified information at the levels that can actually be solved.

In this paper, we give the first explicit finite-level accuracy guarantees for sum-of-squares hierarchies for noncommutative polynomial optimization problems in the Pauli setting. More precisely, we analyze both the NPA hierarchy \cite{PNA2010} and the upper-bound hierarchy for the spectral minimum \cite{klep2025upperboundhierarchiesnoncommutative}. For $k$-local Hamiltonians whose Pauli expansion contains only even-weight terms, the $d$-th level of both hierarchies have error bounded by\looseness=-1
\[
  C(k)\frac{\xi^{n,4}_{d+1}}{n},
\]
where $\xi^{n,4}_{d+1}$ is the smallest root of a Krawtchouk polynomial and $C(k)$ depends only on the locality $k$. For the $k$-local Hamiltonian problem, the even-weight assumption corresponds to Hamiltonians of the form~\eqref{eq:k-local_hamiltonian} with $H_S=0$ whenever $|S|$ is odd. This restriction is not essential for spectral optimization: an arbitrary $k$-local Hamiltonian on $n$ qubits can be embedded into an even-weight Hamiltonian on $n+1$ qubits with the same spectrum.

{
The polynomial kernel method has been recently leveraged to obtain convergence rates for commutative polynomial optimization on the sphere~\cite{fang_sum--squares_2021}, the hypercube~\cite{laurent_effective_2023}, the binary cube~\cite{slot_sum--squares_2023}, the ball, and the simplex~\cite{slot_sum--squares_2024}. 
For a comprehensive overview of convergence-rate results in commutative polynomial optimization, we refer to the recent survey \cite{OverviewConvergenceRates}. 
Our proof adapts the kernel method used in \cite{fang_sum--squares_2021,slot_sum--squares_2023,slot_sum--squares_2024} to the noncommutative setting of Pauli Hamiltonians. 
While our overall strategy is similar to that of the three works above, our convergence rate estimate connects closely to the estimates for the binary cube in~\cite{slot_sum--squares_2023}, which can be seen as a special case of our setting. In particular, both convergence rate estimates depend heavily on Krawtchouk polynomials (although with different parameters).}
We construct kernels on the Pauli algebra that almost reproduce low-degree even Pauli polynomials. After diagonalization, the spectra of these kernels are governed by Krawtchouk polynomials with parameter $q=4$, which leads to the explicit rate above.

\subsection{Main results}
We now state the main convergence estimates. We first recall the
noncommutative polynomial formulation of a Pauli Hamiltonian.

Let $\C\langle x\rangle/\mathcal I$ be the quotient of the free
$*$-algebra generated by variables $x_{\sigma,i}$, where
$\sigma\in\{\sigma_X,\sigma_Y,\sigma_Z\}$ and $i\in\{1,\ldots,n\}$, by the ideal $\mathcal I$ generated by the relations satisfied by the corresponding Pauli operators. 
Evaluation at the
Pauli matrices will be denoted by $p\mapsto p(\sigma)$.
Since Pauli words form a basis of the matrix algebra on $n$ qubits, every Hamiltonian $H$ has a unique representative $p_H\in\mathbb{C}\langle x\rangle/\mathcal I$ such that $H=p_H(\sigma)$. 
If $H$ is $k$-local, then $p_H$ has degree at most $k$. If $H_S=0$ whenever $|S|$ is odd, then $p_H$ contains only even-degree monomials, and we call such a polynomial an even-weight polynomial. {In the remainder of the paper, we work with a general polynomial $p$, since $H$ is uniquely determined by $p_H$.}
Here the degree, or weight, of a Pauli word is the number of qubits on which it acts nontrivially.
For more background on noncommutative polynomial optimization, see \cite{burgdorf_optimization_2016}.\looseness=-1

Let $\Sigma_d\langle x\rangle$ denote the cone of sums-of-Hermitian-squares polynomials of degree at most $2d$. The hierarchy of lower bounds on the spectral minimum can then be written as
\begin{equation}
    \label{eq:main_hierarchy}
\begin{aligned}
    \nu_d(p) & = \sup && \lambda, \\
    & \phantom{{}=}\text{ s.t.} && p- \lambda \in \Sigma_{d}\langle x \rangle/\mathcal I.
\end{aligned}
\end{equation}
Since every feasible $\lambda$ satisfies $\lambda\le \lambda_{\min}(p)$, the numbers $\nu_d(p)$ are certified lower bounds on the spectral minimum.
The hierarchy \eqref{eq:main_hierarchy} is known to converge at level $n$: we have $p - \lambda_{\min}(p(\sigma)) = s^*s$, where $s(\sigma) = G$ for some $G$ with $G^*G = p(\sigma) -\lambda_{\min}(p(\sigma))I_{2^n}$. The polynomial $s$ is of degree at most $n$ since the Pauli words of degree at most $n$ form a basis of the matrix algebra, so this gives a solution to the $n$-th level of the hierarchy. 
In Remark~\ref{rmk:finite_conv}, we recover this finite convergence through our proof. 
\begin{theorem}\label{thm:main}
    Let $p \in \C\langle x \rangle_{k}/\mathcal I$ be a Hermitian even-weight polynomial. 
Let $\lambda_{\min}(p)$ denote the smallest eigenvalue of $p(\sigma)$, and suppose that $\|p(\sigma)\|_{\infty} \leq 1$, where $\|\cdot\|_\infty$ is the spectral norm.
Let $\xi_d^{n,4}$ be the smallest root of the Krawtchouk polynomial of degree $d$ with parameters $n$ and $q=4$. If $\frac{4k}{3} \frac{\xi_{d+1}^{n,4}}{n}\leq \frac{1}{2}$,  
then 
\begin{equation}\label{eq:main_rate}
  \lambda_{\min}(p)-\nu_d(p) \leq C(k)\frac{\xi_{d+1}^{n,4}}{n},
    \end{equation}
    where $C(k)$ is independent of $n,d$ and satisfies
    \[
    C(k)< \frac{2}{3}k(k+2)(1+\sqrt{2})^{2k+1}.
    \]
\end{theorem}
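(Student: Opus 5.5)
We follow the polynomial kernel method of \cite{fang_sum--squares_2021,slot_sum--squares_2023}, transported to the Pauli algebra. Write the Pauli words on $n$ qubits as $\sigma_w$ with $w\in\{0,1,2,3\}^n$; they are orthonormal for the normalized trace inner product $\langle A,B\rangle = 2^{-n}\Tr(A^*B)$. The natural symmetry group is generated by local Clifford conjugations (which permute $X,Y,Z$ on each qubit) together with qubit permutations. Under this group the space of Pauli polynomials splits into isotypic blocks $\mathcal H_j$, the span of words of weight $j$, for $j=0,\dots,n$.

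\textbf{Kernel operator.} For a univariate polynomial $u$ of degree at most $d$, we form the Hermitian square
\[
K = \sum_{w,w'} u_w \overline{u_{w'}}\,\sigma_w\otimes \sigma_{w'}^*,
\]
where $u_w$ depends only on the weight of $w$ (equivalently, a class function of the form $u(\text{weight})$). We then define the linear map
\[
\mathbf K(q) = \sum_{v} (\text{conjugation/twirl}) \cdots
\]
so that for any positive semidefinite $P=p(\sigma)-\lambda_{\min}\succeq 0$, the image $\mathbf K(P)$ equals $\sum_v s_v^* P s_v$, with each $s_v$ of degree at most $d$. Concretely, take $s_v = \sum_w u(|w|)\,\sigma_w \sigma_v$ and average over $v$.

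Because $P\succeq0$ can be written as $G^*G$, every term $s_v^* G^* G s_v$ is a Hermitian square. The goal is a sum-of-squares certificate of degree $2d$, so the degree has to be controlled. We first use the even-weight hypothesis to reduce: $P$ lives in the even subalgebra. We would rather write $\mathbf K(p-\lambda_{\min}+\epsilon)$ directly as a nonnegative combination of $s^*s$ with $\deg s\le d$. This follows the approach of the commutative binary cube, where the kernel $K(x,y)=u(x\cdot y)^2$ is positive on the cube and $\mathbf K f(x)=\mathbb E_y\, K(x,y)f(y)$. In our setting $\mathbf K(P) = \mathbb E_{v}\, S_v^* P S_v$ over uniformly random Pauli words $v$, with $S_v=\sum_w u(|w|)\,\sigma_w$ replaced by a suitable "translate" that is again a polynomial of degree at most $d$.

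\textbf{Diagonalization.} The map $\mathbf K$ commutes with the symmetry group, so it acts on each $\mathcal H_j$ by a scalar $\lambda_j$. We compute $\lambda_j$ via the Pauli character sum: the number of words of weight $i$ at a given relation to a weight-$j$ word gives Krawtchouk polynomials $\mathcal K_i^{n,4}(j)$. We expect
\[
\lambda_j \propto \sum_i c_i\,\mathcal K_i^{n,4}(j)
\]
with $c\ge0$ determined by $u^2$. Evenness is used here, because odd words anticommute/commute with different sign patterns; restricting to even weights yields a cleaner formula with $\lambda_0=1$. Following \cite{slot_sum--squares_2023}, we choose $u^2$ as the square of the Christoffel–Darboux-type polynomial supported at the smallest root $\xi_{d+1}^{n,4}$. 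This is the step where the factor $\xi_{d+1}^{n,4}/n$ enters. It gives
\[
|1-\lambda_j|\le c\,j\,\frac{\xi_{d+1}^{n,4}}{n}
\]
for $j\le k$, and the factor $\frac{4k}{3}$ arises from the explicit value of $\mathcal K_j$ at small arguments.

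\textbf{Conclusion.} Since $\mathbf K$ is invertible on $\mathcal H_{\le k}$, we set $\tilde p=\mathbf K^{-1}p$. Then
\[
\|\tilde p - p\|_\infty \le \sum_{j\le k}|\lambda_j^{-1}-1|\,\|p_j\|_\infty .
\]
We bound $\|p_j\|_\infty$ by $\|p\|_\infty$ times a factor $(1+\sqrt2)^{2j}$-type constant; this is the noncommutative analogue of projection norms, obtained via a Chebyshev/Markov-style estimate for Pauli projections. The hypothesis $\frac{4k}{3}\frac{\xi_{d+1}^{n,4}}{n}\le\frac12$ ensures $\lambda_j\ge \tfrac12$. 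It follows that $\tilde p-\lambda_{\min}+\delta\succeq0$ with $\delta=C(k)\xi_{d+1}^{n,4}/n$. Applying $\mathbf K$ gives $p-\lambda_{\min}+\delta\in\Sigma_d/\mathcal I$.

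\textbf{Main obstacle.} The hardest step is constructing $\mathbf K$ so that it is simultaneously (a) a sum of conjugations $s^*(\cdot)s$ with $\deg s\le d$, which preserves positivity and produces degree $2d$ squares after composition with $G^*G$, and (b) diagonal with Krawtchouk eigenvalues with $q=4$. The degree issue in (a) is delicate because conjugating a high-degree $G$ does not lower the degree. We would first try instead to write $\mathbf K(P)=\mathbb E_\rho\, \Tr(P\rho)\, s_\rho^*s_\rho$ over product stabilizer states $\rho$. Here $\Tr(P\rho)\ge0$ and $s_\rho$ has degree at most $d$, which makes the positivity in (a) automatic.
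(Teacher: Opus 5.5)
Your proposal has a genuine gap exactly at the step you call the main obstacle. You never construct a map $\mathbf K$ that both sends positive semidefinite $p(\sigma)$ into $\Sigma_d\langle x\rangle/\mathcal I$ and has eigenvalues close to $1$ on the even-weight sectors, and both concrete candidates you give fail.

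The Pauli-translate average is degenerate. With $s_v=S\sigma_v$ and $S=\sum_w u(|w|)\sigma_w$, the full Pauli twirl gives $\mathbb E_v\,\sigma_v S^*PS\sigma_v=\Tr(S^*PS)\,I$. With $s_v=\sigma_v S$ it gives $\Tr(P)\,S^*S$. Either way the map has rank one. Besides, any Kraus-type certificate $s^*Ps=(Gs)^*(Gs)$ inherits the degree of $G=P^{1/2}$, as you observe.

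The fallback $\mathbf K(P)=\mathbb E_\rho \Tr(P\rho)\,s_\rho^*s_\rho$ does give degree $2d$ and positivity, but it can never be almost reproducing. Such a measure-and-prepare map is entanglement breaking, hence both completely positive and completely copositive. Take a map commuting with local Cliffords and qubit permutations, with eigenvalues $\lambda_0=1,\lambda_1,\lambda_2$ on weights $0,1,2$. Restrict to two qubits and compute, as in Lemma~\ref{lem:eigCrn}, the spectra of the Choi matrix and of its partial transpose. The two conditions become $\sum_r \lambda_r\mathcal K_r^{2,4}(i)\ge 0$ and $\sum_r(-1)^r\lambda_r\mathcal K_r^{2,4}(i)\ge0$. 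At $i=1$ they read $1\pm 2\lambda_1-3\lambda_2\ge 0$, which forces $\lambda_2\le 1/3$. For product stabilizer states one sees this even more directly: a weight-$m$ Pauli word has nonzero expectation in only a $3^{-m}$ fraction of them, so $|\lambda_m|\le 3^{-m}\lambda_0$. Hence the requirement $\lambda_j\ge 1/2$ in your conclusion cannot be met at any level $d$.

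The missing idea is to impose positivity on the kernel itself, as an element of the two-copy algebra, and to apply it through a partial trace rather than by conjugation or measurement. The paper takes $K=\sum_{r\le 2d}c_rC_r^n$ and shows $C_r^n\equiv(-1)^r\mathcal K_r^{n,4}(\sum_jA_j(x,x'))$, where $A_j=\frac34\,1\otimes1+\frac14\sum_\sigma x_{\sigma,j}\otimes x'_{\sigma,j}$ are commuting projections. If $f=\sum_r c_r(-1)^r\mathcal K_r^{n,4}=g^2$ with $\deg g\le d$, then $K=G^*G$ with $G=g(\sum_jA_j)$ of degree at most $d$ in $x$. Consequently $\mathbf Kp=\Tr_2(K(x,\sigma)(I_{2^n}\otimes p(\sigma)))$ is a sum, over an orthonormal basis $\{u_i\}$, of Hermitian squares of $G\,(I_{2^n}\otimes p(\sigma)^{1/2}u_i)$. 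Its degree is $2d$ regardless of $\deg p$.

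The roles are reversed relative to your sketch. The $c_r$ are the eigenvalues of $\mathbf K$, while $f$, the spectrum of $K(\sigma,\sigma)$, is what must be a sum of squares. Also, the $c_r$ are not nonnegative: for small $r$ the good kernel has $c_r\approx(-1)^r$. It approximates the copositive map $P\mapsto\sigma_Y^{\otimes n}P^{\sf T}\sigma_Y^{\otimes n}$, which acts as $+1$ on even weights and $-1$ on odd weights. Kernel positivity forces, for example, $c_1\le 1/3$; this, not commutation sign patterns, is why the even-weight hypothesis is needed.

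The rate then comes from $\min\{\langle i,f\rangle_4: f\in\Sigma_d,\ \langle 1,f\rangle_4=1\}=\xi_{d+1}^{n,4}$ together with $1-\widehat{\mathcal K}_{2r}^{n,4}(i)\le\frac{8r}{3n}i$. Your remaining steps match the paper: inverting $\mathbf K$ on the low-degree even part, $c_{2r}\ge1/2$, and a dimension-free bound on $\|p_r\|_\infty$. The last of these you only assert. The paper derives it from contractivity of the noise operator $H\mapsto\sum_r\alpha^rH_r$ for $\alpha\in[0,1]$, which is a convex combination of unitary conjugations, together with Vandermonde inversion at Chebyshev nodes.
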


The even-degree hypothesis is a technical assumption 
of the proof, not a restriction on the spectra that can be treated.
In Appendix~\ref{app:evenweight}
we show that every $k$-local Hamiltonian on $n$ qubits can be embedded into an
even-weight Hamiltonian on $n+1$ qubits with the same spectrum, with locality
$k'=k$ for even $k$ and $k'=k+1$ for odd $k$. This extends
\cite[Lemma~1]{bravyi_approximation_2019}, which treats the case $k=2$. 

In the proof of Theorem~\ref{thm:main} (Section \ref{sec:proof_thm1}), we choose a parameter $\eta = 2$ for simplicity. However, for a given degree $d$ the optimal $\eta > 0$ equals
\[
\eta = \Big(1-\frac{4k}{3}\frac{\xi_{d+1}^{n,4}}{n}\Big)^{-1},
\]
which improves the necessary condition on the degree in Theorem \ref{thm:main} to
\[
\frac{4k}{3} \frac{\xi_{d+1}^{n,4}}{n} < 1.
\]
In this case, the bound in \eqref{eq:main_rate} changes by a factor $\eta/2$.
See Section~\ref{sec:bound_sum_c2r} for more details.

The hierarchy \eqref{eq:main_hierarchy} is the noncommutative analog of the moment/sum-of-squares hierarchy in (commutative) polynomial optimization \cite{lasserre_global_2001}, which gives lower bounds on a minimum. Similarly, there is a noncommutative analog to Lasserre's hierarchy of upper bounds on a minimum of a polynomial \cite{lasserre_upperbound_2011}, which gives upper bounds on the spectral minimum \cite{klep2025upperboundhierarchiesnoncommutative}. We can use the functions we construct for the proof of Theorem~\ref{thm:main} to give a convergence rate for this hierarchy as well, similarly to the commutative case \cite{slot_sum--squares_2023, slot_sum--squares_2024,deKlerkLaurentSun2015}.

The (dual of the) upper bound hierarchy is given by
\begin{equation}\label{eq:second_hierarchy}
    \begin{aligned}
        \mu_d(p) &=  \inf && \langle p, s\rangle, \\
        &\phantom{{}=} \text{ s.t.} && \langle 1, s\rangle = 1, \\
        & && s \in \Sigma_d\langle x \rangle/\mathcal I,
    \end{aligned}
\end{equation}
where in our case $\langle \cdot, \cdot\rangle$ denotes the normalized trace inner product
induced by evaluation on Pauli matrices. 
{With $s$ such that $s(\sigma) = vv^\dagger$, where 
$v$ is an eigenvector of $p(\sigma)$ corresponding to the minimum eigenvalue, 
we can use a similar argument as for the hierarchy \eqref{eq:main_hierarchy} to show that the hierarchy \eqref{eq:second_hierarchy} converges at level $n$.}

\begin{theorem}\label{thm:second}
     Let $p \in \C\langle x \rangle_{k}/\mathcal I$ be a Hermitian even-weight polynomial. 
     Let $\lambda_{\min}(p)$ denote the smallest eigenvalue of $p(\sigma)$, and suppose that $\|p(\sigma)\|_{\infty} \leq 1$.
Let $\xi_d^{n,4}$ be the smallest root of the Krawtchouk polynomial of degree $d$ with parameters $n$ and $q=4$. Then
\begin{equation}\label{eq:second_rate}
    \mu_d(p)-\lambda_{\min}(p) \leq \frac{C(k)}{2}\frac{\xi_{d+1}^{n,4}}{n},
    \end{equation}
    where $C(k)$ is the same constant as in Theorem~\ref{thm:main}.
\end{theorem}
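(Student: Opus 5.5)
We follow the commutative upper-bound analysis of \cite{slot_sum--squares_2023,deKlerkLaurentSun2015}: we exhibit a single feasible $s\in\Sigma_d\langle x\rangle/\mathcal I$ with $\langle 1,s\rangle=1$ and $\langle p,s\rangle-\lambda_{\min}(p)\le \frac{C(k)}{2}\frac{\xi^{n,4}_{d+1}}{n}$. The natural candidate is a density obtained by ``smearing'' the ground-state projector with the almost-reproducing kernel built for Theorem~\ref{thm:main}. Let $v$ be a unit ground-state vector of $p(\sigma)$ and $\rho=2^n vv^\dagger$, normalized so that $\langle 1,\rho\rangle=1$ for the normalized trace. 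Write $\mathbf K$ for the kernel operator $\mathbf K(f)=\sum_{w}\lambda_{|w|}\langle w,f\rangle w$, where the sum runs over Pauli words $w$. Here the eigenvalues $\lambda_r$ depend only on the weight $r=|w|$, because the kernel is diagonalized by Krawtchouk polynomials with $q=4$. We take $\mathbf K$ of the form $f\mapsto \sum_j g_j f g_j^*$ with $g_j$ of degree $\le d$, or as a positive combination of conjugations by unitary products of Pauli words. Then $\mathbf K$ maps PSD elements to elements of $\Sigma_d$, and $s:=\mathbf K(\rho)$ is the intended certificate. We would take this structure, and the bounds on $\lambda_r$, directly from the construction in the proof of Theorem~\ref{thm:main}.

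\emph{Step 1 (feasibility).} With $\lambda_0=1$ we get $\langle 1,s\rangle=\lambda_0\langle 1,\rho\rangle=1$. Positivity of $s(\sigma)$ follows from the conjugation form of $\mathbf K$, and the degree bound $\le 2d$ follows from $\deg g_j\le d$. Writing $s(\sigma)=\sum_j (g_j v)(g_j v)^*\cdot 2^n$, each term $g_j v v^* g_j^*$ is a Hermitian square of the element $g_j v$ times a rank-one factor. To make this an honest element of $\Sigma_d$ we use the convolution representation $s=\sum_j h_j^*h_j$, where the $h_j$ are suitable degree-$\le d$ polynomials whose coefficients are built from $v$. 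If the kernel in Theorem~\ref{thm:main} is given as $K=\sum_\ell c_\ell\, q_\ell^* q_\ell$, this step is immediate.

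\emph{Step 2 (error).} By self-adjointness of $\mathbf K$ with respect to the normalized trace,
\[
\langle p,s\rangle=\langle \mathbf K(p),\rho\rangle=\sum_{r\le k,\ r\text{ even}}\lambda_r\langle p_r,\rho\rangle ,
\]
where $p_r$ is the weight-$r$ part of $p$. Since $\langle p,\rho\rangle=\lambda_{\min}(p)$, we obtain
\[
\langle p,s\rangle-\lambda_{\min}(p)=\sum_{r\le k}(\lambda_r-1)\langle p_r,\rho\rangle\le \sum_{r\le k}|1-\lambda_r|\,\|p_r(\sigma)\|_\infty .
\]
We then bound $\|p_r(\sigma)\|_\infty\le \gamma_r\|p(\sigma)\|_\infty$ by the same weight-projection estimate used in Theorem~\ref{thm:main}; this is where the factor $(1+\sqrt2)^{2k+1}$ comes from. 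We also bound $|1-\lambda_r|\le c\, r\,\xi^{n,4}_{d+1}/n$, using the Krawtchouk estimate from Section~\ref{sec:bound_sum_c2r}. Summing reproduces exactly the sum that defines $C(k)$, up to a factor $2$. In Theorem~\ref{thm:main} the factor $\eta=2$ enters through inverting $\mathbf K$, since there one needs $\mathbf K^{-1}(p)$ to be close to $p$. Here no inversion is needed, so the factor $\eta$ disappears. This also explains why the smallness hypothesis on $\xi^{n,4}_{d+1}/n$ is absent.

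\emph{Main obstacle.} The delicate point is Step~1: we must verify that the kernel from Theorem~\ref{thm:main} is completely positive in the required sense. Concretely, $\mathbf K(\rho)$ must lie in the truncated SOS cone $\Sigma_d$, not merely be PSD after evaluation, and we need $\lambda_0=1$. We would first try to realize $\mathbf K$ as $f\mapsto \mathbb E_{U}\big[q(U)\, f\, q(U)^*\big]$, averaging over local Pauli/Clifford twirls with $q$ a degree-$d$ univariate-in-weight polynomial. Such a twirl is diagonal in weight with eigenvalues given by Krawtchouk expansions, which matches the structure already derived in the paper.
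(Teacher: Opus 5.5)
Your proposal is essentially the paper's proof. Your certificate $s=\mathbf K(2^n vv^\dagger)=\sum_w c_{|w|}\,(v^\dagger w(\sigma)v)\,w$ is exactly the paper's $s=(v\otimes I_{2^n})^*K(\sigma,x)(v\otimes I_{2^n})$. Your Step~2 is the paper's error estimate:
\begin{itemize}
\item self-adjointness gives $\langle p,s\rangle=v^\dagger(\mathbf Kp)(\sigma)v$;
\item then $0\le 1-c_{2r}\le\frac{8r}{3n}\xi^{n,4}_{d+1}$ by \eqref{eq:bound_c2r}, summed to $\frac{k(k+2)}{3n}\xi^{n,4}_{d+1}$;
\item this is multiplied by $\gamma_k$ from Lemma~\ref{lem:component_bound}. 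It must be $\gamma_k$, not $\gamma_r$, since the constant depends on the full locality $k$.
\end{itemize}
It also includes the observation that no $1/c_{2r}$ appears, so no condition on $d$ is needed.

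What you call the main obstacle is already settled. The membership $\mathbf K\rho\in\Sigma_d\langle x\rangle/\mathcal I$ is precisely property~\eqref{eq:property2}, proved in Section~\ref{sec:property2} for every $p$ with $p(\sigma)\succeq0$, and $\rho\succeq0$. No twirling construction is needed.

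You should drop the conjugation picture $f\mapsto\sum_j g_jfg_j^*$. It would not give degree at most $2d$, because $\rho$ itself has degree up to $n$. The sum-of-squares structure sits in the kernel: $K(x,\sigma)=f\big(\sum_jA_j(x,\sigma)\big)$ with $f=\sum_\ell g_\ell^2$ a univariate sum of squares. Hence $s=\sum_\ell h_\ell^*h_\ell$, where $h_\ell=g_\ell\big(\sum_jA_j(x,\sigma)\big)(1\otimes v)$ is a vector of polynomials of degree at most $d$. This is the ``convolution representation'' you were reaching for.
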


Unlike Theorem \ref{thm:main},
the upper-bound estimate \eqref{eq:second_rate} holds for every relaxation level
$d$.

\begin{figure}[t]
\centering
\begin{tikzpicture}
\begin{axis}[
    width=0.92\textwidth,
    height=0.62\textwidth,
    xmin=0, xmax=0.75,
    ymin=0, ymax=0.8,
    xlabel={Relaxation fraction $d/n$},
    ylabel={Normalized smallest root},
    xtick={0,0.05,0.10,0.15,0.20,0.25,0.30,0.35,0.40,0.45,0.50,0.55,0.60,0.65,0.70,0.75},
    xticklabel style={
        /pgf/number format/fixed,
        /pgf/number format/precision=2,
        /pgf/number format/fixed zerofill
    },
    grid=major,
    grid style={gray!25},
    tick label style={font=\small},
    label style={font=\small},
    legend style={
        at={(0.98,0.98)},
        anchor=north east,
        draw=gray!50,
        fill=white,
        font=\small
    },
]
% Exact normalized smallest root xi_{d+1}^{40,4}/40
\addplot[
    blue,
    thick,
    mark=*,
    mark size=1.4pt
] coordinates {
    (0.000, 0.750000)
    (0.025, 0.675000)
    (0.050, 0.617302)
    (0.075, 0.568446)
    (0.100, 0.525296)
    (0.125, 0.486280)
    (0.150, 0.450478)
    (0.175, 0.417297)
    (0.200, 0.386327)
    (0.225, 0.357273)
    (0.250, 0.329916)
    (0.275, 0.304086)
    (0.300, 0.279650)
    (0.325, 0.256504)
    (0.350, 0.234563)
    (0.375, 0.213758)
    (0.400, 0.194035)
    (0.425, 0.175348)
    (0.450, 0.157661)
    (0.475, 0.140946)
    (0.500, 0.125181)
    (0.525, 0.110350)
    (0.550, 0.096443)
    (0.575, 0.083453)
    (0.600, 0.071380)
    (0.625, 0.060227)
    (0.650, 0.050000)
    (0.675, 0.040710)
    (0.700, 0.032372)
    (0.725, 0.025000)
    (0.750, 0.018611)
};
\addlegendentry{$\xi_{d+1}^{40,4}/40$}

% Asymptotic curve phi_4(t)
\addplot[
    orange,
    thick,
    dashed,
    domain=0:0.75,
    samples=200
]
{0.75 - (0.5*x + 0.5*sqrt(3*x*(1-x)))};
\addlegendentry{$\phi_4(t)$}

% Theorem condition lines:
% xi/n <= 3/(4k)

\addplot[
    black!65,
    dashed,
    thick,
    domain=0:0.75,
    samples=2
]
{3/(4*2)};
\addlegendentry{$3/(4k)$, $k=2$}

% \addplot[
%     black!70,
%     dotted,
%     thick,
%     domain=0:0.75,
%     samples=2
% ]
% {3/(4*6)};
% \addlegendentry{$3/(4k)$, $k=6$}

\addplot[
    black!65,
    dotted,
    thick,
    domain=0:0.75,
    samples=2
]
{3/(4*4)};
\addlegendentry{$3/(4k)$, $k=4$}

\end{axis}
\end{tikzpicture}
\caption{
Normalized smallest Krawtchouk roots controlling the convergence rates in
Theorems~\ref{thm:main} and~\ref{thm:second}. The solid curve shows
$\xi_{d+1}^{40,4}/40$, while the dashed orange curve is the asymptotic limit
$\phi_4(t)=\frac34-\left(\frac{t}{2}+\frac12\sqrt{3t(1-t)}\right)$.
The two horizontal lines show the optimized sufficient condition
$\xi_{d+1}^{n,4}/n\le 3/(4k)$ for Theorem~\ref{thm:main} (as explained in the paragraph following the theorem), for $k=2,4$.}
\label{fig:krawtchouk-rate-curve}
\end{figure}

The Krawtchouk-root expression in Theorems~\ref{thm:main} and~\ref{thm:second} can be made more explicit
using known estimates for the smallest roots.
By \cite[Corollary~5.20]{vi_universal_1998}, the smallest roots of the Krawtchouk polynomials with parameter $q>1$ 
satisfy 
\[
\frac{\xi_{d}^{n,q}}{n} \leq \frac{q-1}{q} - \frac{q-2}{2q}\frac{h_d^2}{n} - \frac{1}{q}\sqrt{2(q-1)\left(1-\frac{d+2}{n}\right)} \frac{h_d}{\sqrt{n}}
\]
where $h_d$ is the largest root of the Hermite polynomial of degree $d$, which satisfies
\[
h_d = \sqrt{2d} - c(2d)^{-1/6} + o(d^{-1/6})
\]
as $d \to \infty$, where $c = 1.85575\dots$ is a constant.
Asymptotically, we have by \cite[Corollary~5.21]{vi_universal_1998} (see also \cite[Theorem~34]{slot_sum--squares_2023}) that for $t \in [0, \frac{q-1}{q}]$,
\[
\lim_{d/n\to t} \frac{\xi_d^{n,q}}{n} = \phi_q(t) = \frac{q-1}{q} - \left( \frac{q-2}{q}t + \frac{2}{q}\sqrt{(q-1)t(1-t)}\right),  
\]
where the limit means that for any sequences $(n_j)_j, (d_j)_j$ such that $\lim_{j \to \infty} n_j = \infty$ and $\lim_{j \to \infty} d_j/n_j = t$, we have $\lim_j \xi_{d_j}^{n_j, q}/n = \phi_q(t)$.

\subsection{Proof technique}
The proof adapts the Christoffel--Darboux kernel method used to obtain
convergence rates for commutative sum-of-squares hierarchies
\cite{fang_sum--squares_2021, slot_sum--squares_2023, slot_sum--squares_2024}. The main new ingredient is a noncommutative analogue of this construction for the Pauli algebra.

We use reproducing kernels to find an invertible linear map $\mathbf{K}: \C\langle x \rangle_{2d}/\mathcal I \to \C\langle x \rangle_{2d}/ \mathcal I$ (with $2d\geq k$) satisfying the following properties:
\begin{align}
    &\mathbf{K}1 = 1 &&\tag{P1} \label{eq:property1} \\
    &\mathbf{K}p \in \Sigma_d\langle x \rangle/\mathcal I && \text{if $p(\sigma)$ is positive semidefinite} \tag{P2}\label{eq:property2} \\
    &\| \mathbf{K}^{-1}p - p\|_{\infty} \leq \varepsilon && 
    \text{if $p$ is an even-weight polynomial of degree $k$ with $\|p\|_{\infty} \leq 1$}
        \tag{P3}\label{eq:property3}
\end{align}
where $\|p\|_\infty$ is the spectral norm after evaluation on the Pauli matrices.

\begin{lemma}\label{lem:properties_K}
    Let $\mathbf K$ be an invertible linear map satisfying properties \eqref{eq:property1}-\eqref{eq:property3}. Let $p \in \C\langle x \rangle_k /\mathcal I$ be Hermitian and suppose that every monomial of $p$ has even degree with $\|p\|_{\infty} \leq 1$. Then $\lambda_{\min}(p)-\nu_d(p) \leq \varepsilon$. 
\end{lemma}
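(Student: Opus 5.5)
The plan is to exhibit an explicit feasible point for \eqref{eq:main_hierarchy}, namely $\lambda = \lambda_{\min}(p)-\varepsilon$. This gives $\nu_d(p)\ge \lambda_{\min}(p)-\varepsilon$. Feasibility means we must show that $p-\lambda_{\min}(p)+\varepsilon \in \Sigma_d\langle x\rangle/\mathcal I$. The idea is to write this polynomial as $\mathbf K q$ for some $q$ with $q(\sigma)\succeq 0$, and then invoke \eqref{eq:property2}.

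First, set $\lambda^* = \lambda_{\min}(p)$ and $\tilde p = p-\lambda^*$. Since $\mathbf K$ is invertible and linear, put
\[
q := \mathbf K^{-1}\bigl(p - \lambda^* + \varepsilon\bigr) = \mathbf K^{-1}p - \lambda^* + \varepsilon .
\]
The second equality uses \eqref{eq:property1}, which gives $\mathbf K^{-1}1 = 1$. Here we need $p\in \C\langle x\rangle_{2d}/\mathcal I$, which holds because $2d\ge k$. We then write
\[
q = p - \lambda^* + \bigl(\mathbf K^{-1}p - p\bigr) + \varepsilon .
\]
Evaluating at $\sigma$, the term $p(\sigma)-\lambda^* I$ is positive semidefinite by definition of $\lambda_{\min}$. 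Property \eqref{eq:property3} applies because $p$ is even-weight of degree at most $k$ with $\|p\|_\infty\le 1$. It gives $\|(\mathbf K^{-1}p-p)(\sigma)\|_\infty\le\varepsilon$, so $(\mathbf K^{-1}p - p)(\sigma)+\varepsilon I \succeq 0$.

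One subtlety: this operator-norm bound yields the Loewner inequality only if $(\mathbf K^{-1}p-p)(\sigma)$ is Hermitian. So I would check that $\mathbf K$ commutes with the involution, i.e.\ that it maps Hermitian elements to Hermitian elements. If this is not built into the hypotheses, I would replace $q$ by its Hermitian part. Since $p-\lambda^*+\varepsilon$ is Hermitian, this replacement is harmless provided $\mathbf K$ respects $*$, which holds for the kernel construction. With that settled, $q(\sigma)\succeq 0$.

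Finally, \eqref{eq:property2} gives $\mathbf K q = p-\lambda^*+\varepsilon\in\Sigma_d\langle x\rangle/\mathcal I$. Hence $\lambda^*-\varepsilon$ is feasible for \eqref{eq:main_hierarchy}, and $\lambda_{\min}(p)-\nu_d(p)\le\varepsilon$.

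The main obstacle is the Hermiticity point above, together with the degree bookkeeping. Everything else is a direct chain of the three properties.
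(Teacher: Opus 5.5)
Your proposal is correct and matches the paper's proof step for step: use \eqref{eq:property1} to get $\mathbf K^{-1}(p-\lambda_{\min}(p)+\varepsilon)=\mathbf K^{-1}p-\lambda_{\min}(p)+\varepsilon$, combine \eqref{eq:property3} with $p(\sigma)\succeq\lambda_{\min}(p)I_{2^n}$ to get positivity, and apply \eqref{eq:property2}. The Hermiticity issue you flag is left implicit in the paper and is settled by the construction: there $\mathbf Kp=\sum_r c_r p_r$ with real $c_r$, so both $\mathbf K$ and $\mathbf K^{-1}$ commute with the involution and $q$ is already Hermitian; your ``replace by the Hermitian part'' fallback is therefore unnecessary, and for a map $\mathbf K$ that did not respect $*$ it would not obviously preserve the identity $\mathbf Kq=p-\lambda_{\min}(p)+\varepsilon$.
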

\begin{proof}
Since $\mathbf K1=1$, we have
\[
\mathbf K^{-1}(p-\lambda_{\min}(p) +\varepsilon)=\mathbf K^{-1}p-\lambda_{\min}(p)+\varepsilon.
\]
By \eqref{eq:property3}, the polynomial $\mathbf K^{-1}p$ differs from $p$ in spectral norm by
at most $\varepsilon$. Since $p(\sigma)\succeq \lambda_{\min}(p) {I_{2^n}}$, it follows that
\[
\mathbf K^{-1}(p-\lambda_{\min}(p)+\varepsilon)(\sigma)\succeq 0.
\]
Applying \eqref{eq:property2} to $\mathbf K^{-1}(p-\lambda_{\min}(p)+\varepsilon)$ gives
\[
p-\lambda_{\min}(p)+\varepsilon
=
\mathbf K\big(\mathbf K^{-1}(p-\lambda_{\min}(p)+\varepsilon)\big)
\in \Sigma_d\langle x\rangle/\mathcal I.
\]
Thus $\nu_d(p)\geq \lambda_{\min}(p)-\varepsilon$.
\end{proof}

Thus the problem reduces to constructing a positive kernel that is close to the identity on low-degree even Pauli polynomials. We start from the reproducing kernel associated with the normalized trace inner product and perturb its homogeneous components so that the resulting kernel is positive while remaining almost reproducing.

\subsection{Guide to the paper}

Section~\ref{sec:preliminaries} recalls the kernel formalism and the basic facts about Krawtchouk polynomials used throughout the proof. In Section~\ref{sec:repr_kernels} we construct the homogeneous reproducing kernels for the Pauli algebra and diagonalize them, showing that their spectra are governed by Krawtchouk polynomials with parameter $q=4$. Section~\ref{sec:proof_thm1} uses this diagonalization to construct an almost-reproducing positive kernel and prove the lower-bound estimate of Theorem~\ref{thm:main}. Section~\ref{sec:proof_thm2} applies the same kernel to the dual upper-bound hierarchy and proves Theorem~\ref{thm:second}. Section~\ref{sec:concl} discusses limitations and possible extensions, including qudit systems, swap-operator formulations, and sparse hierarchies. Appendix~\ref{app:evenweight} gives the reduction from general local Hamiltonians to even-weight Hamiltonians, while Appendices~\ref{app:proofs_lemmas} and \ref{app:bound_gamma_k} contain the auxiliary estimates on Krawtchouk polynomials and homogeneous Pauli components.

\section{Preliminaries}\label{sec:preliminaries}
\subsection{Kernels}
\label{sec:prelim_kernels}
Let $P \subseteq \C\langle x\rangle$ be a polynomial space with inner product $\langle \cdot, \cdot\rangle$. A kernel is an element of $P \otimes P$, and there is a natural inner product on $P \otimes P$ given by $\langle a_1 \otimes b_1, a_2 \otimes b_2\rangle = \langle a_1, a_2\rangle\langle b_1, b_2\rangle$. 
{We use the convention that inner products are linear in the second argument.}
For $K=\sum_{i=1}^N b_{1,i} \otimes b_{2,i}^* \in P \otimes P$, where $*$ denotes the involution of $\C\langle x\rangle$ 
 (representing the Hermitian conjugate), we define the operator $\mathbf K:P \to P$ by
\begin{equation}\label{eq:fatK}
\mathbf Kp = \sum_{i=1}^N b_{1,i} \langle b_{2,i}, p\rangle.
\end{equation}
{We often evaluate the second factor of the kernel $K$ on Pauli matrices, while the first factor remains a polynomial in $x$, to obtain $K(x, \sigma) = \sum_{i=1}^N b_{1,i}(x) \otimes b_{2,i}(\sigma)^*$. 
We will use $\langle p_1, p_2\rangle = \Tr(p_1(\sigma)^*p_2(\sigma))$,
where $\Tr$ denotes the normalized trace, and where we denote by $p(\sigma)$ the evaluation of $p$ on the matrices
\begin{align}
\label{def:pauli}
\sigma^i = I_2 \otimes \dots \otimes \sigma \otimes \dots \otimes I_2.
\end{align}
Here the matrix $\sigma \in \{\sigma_X, \sigma_Y, \sigma_Z\}$ acts on the $i$-th qubit. 
Then we obtain the following expression equivalent to \eqref{eq:fatK}:}
\[
\mathbf Kp = {\sum_{i=1}^N b_{1,i}(x) \Tr(b_{2,i}(\sigma)^* p(\sigma) )} = \Tr_2(K(x, \sigma)({I_{2^n}} \otimes p(\sigma))),
\]
where $\Tr_2$ is the partial trace defined by $\Tr_2(A \otimes B) = \Tr(B) A $ and extended linearly.

Now if $b_{1,i} = b_{2,i} = b_i$ and $\{b_i\}$ is an orthonormal basis with respect to the inner product, then $K$ is a reproducing kernel: the associated linear operator satisfies $\mathbf Kp = p$. The reproducing kernel depends only on the inner product, and not on the choice of orthonormal basis. Such a kernel is also sometimes called a Christoffel-Darboux kernel.

\subsection{Krawtchouk polynomials}
Let $q \geq 2$ and let $w_q:\{0, \dots, n\} \to \R$ be the weight function
\[
w_q(i) = q^{-n}(q-1)^i{n \choose i}.
\]
This defines an inner product 
\[
\langle p_1, p_2\rangle_{q} = \sum_{i=0}^n p_1(i)p_2(i)w_q(i),
\]
where $p_1$ and $p_2$ are univariate polynomials with real coefficients.
The standard orthogonal polynomials with respect to this inner product are the Krawtchouk polynomials 
\[
\mathcal K_r^{n,q}(i) = \sum_{j=0}^r(-q)^j(q-1)^{r-j}{n-j \choose r-j}{i \choose j}
\]
of degree $r$ (see, e.g., \cite[Section~5.7]{macwilliams_theory_1983}).
Note that this is not the standard definition of the Krawtchouk polynomials, but an equivalent expression that will come up more naturally in this paper.

The Krawtchouk polynomials satisfy the orthogonality relations
\[
\langle \mathcal K_r^{n,q}, \mathcal K_{r'}^{n,q}\rangle_{q} = \delta_{r,r'}(q-1)^r{n \choose r}.
\]
Furthermore, $\mathcal K_r^{n,q}(0) = (q-1)^r{n \choose r} = \|\mathcal K_r^{n,q}\|^2_{q}$, where $\| \cdot \|_q$ is the norm induced by the above inner product. 
We also consider the dual normalization
\[
\widehat{\mathcal K}_r^{n,q}(i) = \frac{\mathcal K_r^{n,q}(i)}{\|\mathcal K_r^{n,q}\|_q^2}
\]
so that 
\[
\langle \widehat{\mathcal K}_r^{n,q}, \mathcal K_{r'}^{n,q}\rangle_{q} = \delta_{r,r'}.
\]

The following two elementary lemmas will be used to control the coefficients of the almost-reproducing kernel.

\begin{lemma}\label{lem:nk1}
For every $0 \le i \le n$ and $0 \le r \le n$,
\[
\mathcal K_r^{n,q}(i) \le \mathcal K_r^{n,q}(0)=(q-1)^r\binom{n}{r}.
\]
\end{lemma}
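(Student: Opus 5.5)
The plan is to realize $\mathcal K_r^{n,q}$ as a character sum over the group $\mathbb{Z}_q^n$ and then apply the triangle inequality. Fix an abelian group $G$ of order $q$, say $\mathbb{Z}_q$. Let $\chi$ range over its characters and put $\widehat{G}^\ast=\widehat G\setminus\{1\}$, which has $q-1$ elements. For a word $x\in G^n$ of Hamming weight $i$ we will establish
\[
\mathcal K_r^{n,q}(i)=\sum_{\substack{T\subseteq\{1,\dots,n\}\\ |T|=r}}\ \sum_{(\chi_t)_{t\in T}\in(\widehat{G}^\ast)^T}\ \prod_{t\in T}\chi_t(x_t).
\]
The left side depends only on $i$, because the right side is invariant under coordinate permutations and under automorphisms of $G$ on each coordinate. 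So this identity is the only real content of the lemma.

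To check the identity, expand the generating function $\prod_{t=1}^n\bigl(1+z\sum_{\chi\in\widehat G^\ast}\chi(x_t)\bigr)$. The inner sum equals $q-1$ when $x_t=0$ and equals $-1$ when $x_t\neq 0$, since the full sum of characters vanishes at a nonzero element. The product is therefore $(1+(q-1)z)^{n-i}(1-z)^i$. Now write $1-z=(1+(q-1)z)-qz$ and expand binomially:
\[
\sum_j \binom{i}{j}(-qz)^j\,(1+(q-1)z)^{n-j}.
\]
Reading off the coefficient of $z^r$ gives
\[
\sum_j(-q)^j\binom ij\binom{n-j}{r-j}(q-1)^{r-j},
\]
which is exactly the expression for $\mathcal K_r^{n,q}(i)$ used in the paper. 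This binomial bookkeeping is the step most worth double-checking; it is routine rather than a genuine obstacle.

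With the identity in hand, the bound is immediate. Each term $\prod_{t\in T}\chi_t(x_t)$ has modulus $1$. There are $\binom nr$ choices of $T$ and $(q-1)^r$ choices of the characters, so $\mathcal K_r^{n,q}(i)$, which is real, satisfies
\[
\mathcal K_r^{n,q}(i)\le |\mathcal K_r^{n,q}(i)|\le (q-1)^r\binom nr.
\]
At $i=0$ every term equals $1$, so equality holds there. This agrees with the value $\mathcal K_r^{n,q}(0)=(q-1)^r\binom nr$ already stated in the paper, and it can also be read directly from the formula, since only the $j=0$ term survives.

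A purely combinatorial alternative is to compare the generating functions coefficientwise. This works because $|1-z|\le 1+(q-1)z$ dominates in absolute-value coefficients: the coefficients of $(1+(q-1)z)^{n-i}(1-z)^i$ are bounded in absolute value by those of $(1+(q-1)z)^{n-i}(1+z)^i$, and hence by those of $(1+(q-1)z)^n$, since $q-1\ge1$. I would present this version, because it avoids characters entirely and needs only $q\ge2$. It also does not require $q$ to be an integer, although the paper only uses $q=4$.
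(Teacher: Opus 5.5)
Your argument is correct. The version you say you would present (dominating the coefficients of $(1-z)^i(1+(q-1)z)^{n-i}$ by those of $(1+z)^i(1+(q-1)z)^{n-i}$ and then by those of $(1+(q-1)z)^n$) is exactly the paper's proof. The paper merely quotes the generating function. Your rewriting $1-z=(1+(q-1)z)-qz$ checks that it matches the paper's nonstandard explicit formula. Tracking absolute values also gives the two-sided bound $|\mathcal K_r^{n,q}(i)|\le\mathcal K_r^{n,q}(0)$ at no extra cost.

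Your primary, character-sum route is genuinely different. Writing $\mathcal K_r^{n,q}(i)$ as a sum of $(q-1)^r\binom nr$ unimodular terms explains the bound conceptually: it is the triangle inequality for an eigenvalue of the Hamming scheme. However, it requires $q$ to be an integer. The generating-function argument works for all real $q\ge 2$, and it is the same machinery the paper reuses for Lemma~\ref{lem:aff_bound_k}. For $q=4$ your mechanism is in fact already present in the paper: by Lemma~\ref{lem:eigCrn}, every $(-1)^r\mathcal K_r^{n,4}(i)$ is an eigenvalue of $C_r^n(\sigma,\sigma)$, which is a sum of $3^r\binom nr$ Hermitian unitaries.

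One small slip: invariance under automorphisms of $\mathbb Z_q$ does not by itself make the character sum depend only on $i$ (for $q=4$ no automorphism sends $1$ to $2$). This remark is not needed, though, since your identity $\sum_{\chi\ne 1}\chi(g)=-1$ for every $g\ne 0$ already shows it.
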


\begin{lemma}\label{lem:aff_bound_k}
For every $0 \le i \le n$ and $1 \le r \le n$,
\[
\mathcal K_r^{n,q}(0)-\mathcal K_r^{n,q}(i)
\le
q \cdot (q-1)^{r-1}\binom{n-1}{r-1}\, i.
\]
Equivalently, 
\[
1-\widehat{\mathcal K}_r^{n,q}(i)\le \frac{qr}{(q-1)n}\,i.
\]
\end{lemma}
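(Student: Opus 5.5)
The statement to prove is Lemma~\ref{lem:aff_bound_k}. The plan is to write the difference $\mathcal K_r^{n,q}(0)-\mathcal K_r^{n,q}(i)$ as a telescoping sum of consecutive differences $\mathcal K_r^{n,q}(j)-\mathcal K_r^{n,q}(j+1)$ for $j=0,\dots,i-1$. It then suffices to bound each increment by $q(q-1)^{r-1}\binom{n-1}{r-1}$.

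The key tool is a forward-difference identity coming from the explicit formula. Since $\binom{j+1}{l}-\binom{j}{l}=\binom{j}{l-1}$, we get
\[
\mathcal K_r^{n,q}(j+1)-\mathcal K_r^{n,q}(j)=\sum_{l=1}^r(-q)^l(q-1)^{r-l}\binom{n-l}{r-l}\binom{j}{l-1}.
\]
Substituting $m=l-1$ turns this into
\[
-q\sum_{m=0}^{r-1}(-q)^m(q-1)^{(r-1)-m}\binom{(n-1)-m}{(r-1)-m}\binom{j}{m}=-q\,\mathcal K_{r-1}^{n-1,q}(j),
\]
valid for $0\le j\le n-1$. Hence
\[
\mathcal K_r^{n,q}(0)-\mathcal K_r^{n,q}(i)=q\sum_{j=0}^{i-1}\mathcal K_{r-1}^{n-1,q}(j).
\]
Every $j$ in this sum lies in $\{0,\dots,n-1\}$, so Lemma~\ref{lem:nk1}, applied with parameters $n-1$ and degree $r-1$, gives $\mathcal K_{r-1}^{n-1,q}(j)\le (q-1)^{r-1}\binom{n-1}{r-1}$. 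Summing over the $i$ terms yields exactly the claimed bound. The case $i=0$ is trivial.

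For the equivalent form, divide by $\mathcal K_r^{n,q}(0)=(q-1)^r\binom nr$ and use $\binom{n-1}{r-1}/\binom nr=r/n$. This gives $1-\widehat{\mathcal K}_r^{n,q}(i)\le \frac{q}{q-1}\frac{r}{n}\,i$.

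The main obstacle is Lemma~\ref{lem:nk1} itself, in case it must be proved rather than assumed. I would prove it by the generating function
\[
\sum_r\mathcal K_r^{n,q}(i)z^r=(1+(q-1)z)^{n-i}(1-z)^i,
\]
which is consistent with the stated formula once one writes $1-z=(1+(q-1)z)-qz$. This can be combined with the standard bound $|\mathcal K_r(i)|\le\mathcal K_r(0)$, which follows from orthogonality: by the duality $\mathcal K_r(i)/\mathcal K_r(0)=\mathcal K_i(r)/\mathcal K_i(0)$ it is a character sum of absolute value at most one. Since Lemma~\ref{lem:nk1} is stated earlier in the paper, however, I would simply invoke it. The only check needed is that the shift to parameters $(n-1,r-1)$ is legitimate when $r=1$, where $\mathcal K_0\equiv1$.
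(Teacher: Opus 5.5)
Your proof is correct and is essentially the paper's argument. Your forward-difference identity $\mathcal K_r^{n,q}(j+1)-\mathcal K_r^{n,q}(j)=-q\,\mathcal K_{r-1}^{n-1,q}(j)$ is the coefficient-level form of the paper's factorization $(1+(q-1)z)^i-(1-z)^i=qz\sum_{s=0}^{i-1}(1+(q-1)z)^{i-1-s}(1-z)^s$: multiplying by $(1+(q-1)z)^{n-i}$ and extracting $[z^r]$ gives exactly $q\sum_{s=0}^{i-1}\mathcal K_{r-1}^{n-1,q}(s)$. The paper's coefficientwise bound $(1-z)^s\ll(1+(q-1)z)^s$ is precisely Lemma~\ref{lem:nk1} with parameters $(n-1,r-1)$, and the only difference is that you derive the telescoping identity from the explicit formula via Pascal's rule rather than from the generating function.
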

We prove both lemmas in Appendix~\ref{app:proofs_lemmas}. Lemma~\ref{lem:aff_bound_k} is a slight strengthening of Lemma~33 in \cite{slot_sum--squares_2023} for $q > 2$, which states that
\[
1 - \widehat{\mathcal K}_r^{n,q}(i) \leq \frac{2r}{n}i.
\]

\section{Reproducing kernels for the Pauli algebra}\label{sec:repr_kernels}
Recall that we consider polynomials in variables $x_{\sigma,i}$ where $\sigma$ is a Pauli matrix and $i \in \{1, \dots, n\}$. Furthermore, the polynomials have no terms with multiple variables corresponding to the same $i$, and the variables $x_{\sigma, i}$ satisfy the same relations as the Pauli matrices. 
The inner product on this polynomial space is given by
\[
\langle p_1, p_2\rangle = \Tr(p_1^*(\sigma)p_2(\sigma)).
\]
Recall that the Pauli matrices together with the identity form an orthonormal basis with respect to the trace inner product. {Thus the tensor products of these matrices form an orthonormal basis of $M_{2^n \times 2^n}(\C)$.}
In particular, this means that the words 
\[
\prod_{j=1}^r x_{\sigma_j,i_j}
\]
form a basis of the space of polynomials of degree $r$, {that is orthonormal with respect to the trace inner product}, where $\sigma_j \in \{\sigma_X, \sigma_Y, \sigma_Z\}$ and $1\leq i_j< i_{j+1} \leq n$ for all $j$.

Define the reproducing kernels for the spaces of homogeneous polynomials of degree $r$ by 
\begin{align}
\label{eq:defCrn}
C_r^{n} = \sum_{1 \leq i_1 < \dots < i_r\leq n} \sum_{\sigma_j \in \{\sigma_X, \sigma_Y, \sigma_Z\}} \prod_{j=1}^rx_{\sigma_j, i_j} \otimes x_{\sigma_j, i_j}.
\end{align}
Then the reproducing kernel for the space of polynomials up to degree $d$ is given by
\[
\sum_{r=0}^d C_r^{n}.
\]

Now evaluation of $C_r^{n}$ on the Pauli matrices gives
\[
C_r^{n}(\sigma,\sigma) = \sum_{\substack{J \subseteq [n] \\ |J| = r}}\prod_{j \in J}\sum_{W \in \{X, Y, Z\}} \sigma_W^{j} \otimes \sigma_W^j. \]
In the following sections, it will be useful to write $C_r^{n}(\sigma,\sigma)$ in terms of the matrices 
\[A_j = \frac{3}{4}{I_{4^n}} + \frac{1}{4}\sum_{W \in \{X, Y, Z\}} \sigma_W^j \otimes \sigma_W^j.
\] 
Thus $A_j$ acts on the two-copy space of the $j$-th qubit. 
This gives
\[
C_r^{n}(\sigma,\sigma) = \sum_{\substack{J \subseteq [n] \\ |J| = r}} \prod_{j \in J} (4A_j - 3{I_{4^n}}) = \sum_{l=0}^r (4)^l(-3)^{r-l}{n-l \choose r-l} \sum_{\substack{J \subseteq [n] \\ |J| = l}}\prod_{j \in J} A_j,
\]
{as a consequence of pairwise commutativity of the $A_j$'s.}

\subsection{Reduction to the Krawtchouk polynomials}
We now diagonalize the  kernels $C_r^n(\sigma,\sigma)$ by expressing the eigenvalues of $C_r^n(\sigma, \sigma)$ in terms of Krawtchouk polynomials with parameter $q=4$. 

Let $A = \frac{3}{4} {I_4} + \frac{1}{4}\sum_{W \in \{X, Y, Z\}} \sigma_W \otimes \sigma_W$ be the single-site version of the operators $A_j$. It has eigenvalues $0$ and $1$ with multiplicity $1$ and $3$, respectively. 
Let $Q$ be an orthogonal matrix such that
\[
Q^{\sf T}AQ= \mathrm{Diag}(0, 1, 1, 1).
\]
Then $Q^{\otimes n}$ diagonalizes the matrices $C_r^{n}(\sigma,\sigma)$, where the
$j$-th copy of $Q$ acts on the two-copy space corresponding to the $j$-th qubit. 
The diagonal entries of $(Q^{\otimes n})^{\sf T} C_r^n(\sigma,\sigma) Q^{\otimes n}$ are then given by
\begin{align}
((Q^{\otimes n})^{\sf T} C_r^{n}(\sigma,\sigma) Q^{\otimes n})_{a,a} &= \sum_{l=0}^r 4^l(-3)^{r-l}{n-l \choose r-l} \sum_{\substack{J \subseteq [n] \\ |J| = l}}\prod_{j \in J} a_j \nonumber \\
&= (-1)^{r}\sum_{l=0}^r (-4)^{l}(3)^{r-l}{n-l \choose r-l} {\sum_j a_j \choose l} \label{eq:krawappears}
\end{align}
where $a \in \{0,1\}^n$ indicates that we consider the eigenvalues $a_j$ of $A_j$. In particular, the expression \eqref{eq:krawappears} is $(-1)^r \mathcal K_r^{n,q}(i)$ with $i = \sum_j a_j \in \{0, \dots, n\}$ and $q = 4$.

We record the conclusion as a lemma.

\begin{lemma}
\label{lem:eigCrn}
For $0\le r\le n$, the eigenvalues of $C_r^n(\sigma,\sigma)$ are
$(-1)^r\mathcal K_r^{n,4}(i)$, $i=0,\ldots,n.$ 
The eigenvalue corresponding to $i$ has multiplicity
$
3^i{n\choose i}.$
Consequently, the kernels $C_r^n$ are pairwise orthogonal with respect to
the trace inner product, and
\[
\|C_r^n\|^2=3^r{n\choose r}.
\]
\end{lemma}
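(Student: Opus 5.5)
The plan is to read off the spectrum from the simultaneous diagonalization already set up before the statement, and then get the orthogonality and norm claims from the Krawtchouk orthogonality relations.

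\textbf{Step 1: the spectrum.} The operators $A_j$ act on different tensor factors, so they commute pairwise. Conjugating by $Q^{\otimes n}$ diagonalizes all of them simultaneously. The diagonal entry of $A_j$ indexed by $a\in\{0,1\}^n$ is then $a_j$.

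By \eqref{eq:krawappears}, the diagonal entry of $C_r^n(\sigma,\sigma)$ at $a$ depends only on $i=\sum_j a_j$. This uses $\sum_{|J|=l}\prod_{j\in J}a_j=\binom{i}{l}$ and the pulled-out sign $(-1)^r$. Comparing with the defining formula for $\mathcal K_r^{n,4}$, this entry equals $(-1)^r\mathcal K_r^{n,4}(i)$.

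\textbf{Step 2: multiplicities.} The eigenvalue $1$ of $A$ has multiplicity $3$ and the eigenvalue $0$ has multiplicity $1$. So the diagonal positions with $\sum_j a_j=i$ number $\binom{n}{i}3^i$: choose which $i$ sites sit in the eigenvalue-$1$ eigenspace, and pick one of $3$ basis vectors at each of them.

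This counts multiplicity of the label $i$. If two values $\mathcal K_r^{n,4}(i)$ coincide, the multiplicities simply add. The statement should be read with this convention, and that reading is all the later steps need.

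\textbf{Step 3: orthogonality and norms.} The kernel inner product $\langle a_1\otimes b_1,a_2\otimes b_2\rangle=\langle a_1,a_2\rangle\langle b_1,b_2\rangle$ coincides with the normalized trace on $M_{4^n}$ after evaluation. This holds because the normalized trace factorizes over tensor products. Hence
\[
\langle C_r^n,C_{r'}^n\rangle=\Tr\big(C_r^n(\sigma,\sigma)^*C_{r'}^n(\sigma,\sigma)\big).
\]
The conjugation by $Q^{\otimes n}$ is orthogonal, so this trace can be computed in the common eigenbasis. It becomes
\[
4^{-n}\sum_{i=0}^n 3^i\binom{n}{i}(-1)^{r+r'}\mathcal K_r^{n,4}(i)\mathcal K_{r'}^{n,4}(i).
\]
The weights $4^{-n}3^i\binom{n}{i}$ are exactly $w_4(i)$, so this equals $(-1)^{r+r'}\langle \mathcal K_r^{n,4},\mathcal K_{r'}^{n,4}\rangle_4=\delta_{r,r'}3^r\binom{n}{r}$ by the orthogonality relations.

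As a cross-check, the norm also follows directly. The $C_r^n$ are built from disjoint sets of orthonormal product basis elements $w\otimes w$, and there are $3^r\binom nr$ of them in $C_r^n$. This gives both the orthogonality and the norm immediately.

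\textbf{Main obstacle.} The only delicate point is the bookkeeping identification in Step 1: matching signs and the indices $l$ versus $j$ in \eqref{eq:krawappears} with the definition of $\mathcal K_r^{n,q}$. This is done by setting $j=l$ and factoring out $(-1)^r$ from $4^l(-3)^{r-l}=(-1)^r(-4)^l3^{r-l}$.
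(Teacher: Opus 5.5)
Your proposal is correct and follows the paper's proof essentially step by step. You read the eigenvalues off the simultaneous diagonalization in \eqref{eq:krawappears}, count the $3^i\binom{n}{i}$ joint eigenvectors with label $i$, and identify the induced weight $4^{-n}3^i\binom{n}{i}$ with $w_4(i)$, which reduces orthogonality and the norm to the Krawtchouk orthogonality relations; your extra direct check (each $C_r^n$ is a sum of $3^r\binom{n}{r}$ orthonormal elements $w\otimes w$, disjoint for different $r$) and your remark that multiplicities refer to the label $i$ and add when values $\mathcal K_r^{n,4}(i)$ coincide (e.g.\ for $r=0$) are valid clarifications the paper leaves implicit.
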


\begin{proof}
The diagonalization above gives the eigenvalue formula. It remains only to
account for the multiplicities and the norm. The eigenspace of a single
$A_j$ with eigenvalue $1$ has dimension $3$, while the eigenspace with
eigenvalue $0$ has dimension $1$. Hence the number of joint eigenvectors
with $\sum_j a_j=i$ is
\[
3^i{n\choose i}.
\]
Since we use the normalized trace on the $4^n$-dimensional 2-copy space,
the induced weight on $i\in\{0,\ldots,n\}$ is\looseness=-1
\[
w(i)=4^{-n}3^i{n\choose i}.
\]
This is exactly the Krawtchouk weight $w_4(i)$. Therefore
\[
\langle C_r^n,C_s^n\rangle
=
\sum_{i=0}^n
(-1)^r\mathcal K_r^{n,4}(i)
(-1)^s\mathcal K_s^{n,4}(i)
w_4(i)
= 
{ (-1)^{r+s}
\langle \mathcal K_r^{n,4}, \mathcal K_s^{n,4}
\rangle_4}
.
\]
By the orthogonality relations for the Krawtchouk polynomials, this equals
$0$ for $r\neq s$, and for $r=s$ it equals
\[
3^r{n\choose r}.\qedhere
\]
\end{proof}

Finally, the same identity can be written symbolically in the polynomial
algebra. Since the variables $x_{\sigma,j}$ satisfy the Pauli relations,
we have
\begin{equation}
\label{eq:K_to_krawtchouk}
C_r^n(x,x')
\equiv
(-1)^r
\mathcal K_r^{n,4}\Big(\sum_{j=1}^n A_j(x,x')\Big)
\pmod{\mathcal I\otimes\mathcal I},
\end{equation}
where
\[
A_j(x,x')
=
\frac34\,1\otimes 1
+
\frac14\sum_{\sigma\in\{\sigma_X,\sigma_Y,\sigma_Z\}}
x_{\sigma,j}\otimes x'_{\sigma,j}.
\]
{
The equality stated in \eqref{eq:K_to_krawtchouk} holds in the quotient algebra 
$\pmod{\mathcal I\otimes\mathcal I}$ obtained after imposing the Pauli relations separately in each tensor factor.
}

\section{Proof of Theorem~\ref{thm:main}}\label{sec:proof_thm1}

Recall that by Lemma~\ref{lem:properties_K}, it suffices to find a linear operator $\mathbf K$ with the properties
\begin{align}
&\mathbf K1 = 1, \tag{P1} \\
&\mathbf Kp \in \Sigma_{d}\langle x \rangle/\mathcal I && \text{ if } p(\sigma) \succeq 0, \tag{P2}\\
 &\|\mathbf K^{-1}p - p \|_{\infty} \leq \varepsilon \quad &&\text{ if $p$ is an even-weight polynomial of degree $k$ with $\|p\|_{\infty} \leq 1$}. \tag{P3}
\end{align}
Then $\varepsilon = \varepsilon(n,k,d)$ gives the convergence rate of the hierarchy.  

Note that \eqref{eq:property3} says that $\mathbf K$ should be almost reproducing for such $p$. Therefore we consider the linear operator corresponding to a slight modification of the reproducing kernel:
\[
K = \sum_{r=0}^{2d} c_r C_r^{n},
\]
where $C_r^{n}$ is defined in \eqref{eq:defCrn}. 
In particular, this gives
\begin{equation}
    \label{eq:Kp_cpk}
\mathbf Kp = {\Tr_2(K(x,\sigma)(I_{2^n} \otimes p(\sigma)))} = \sum_{r=0}^{2d} c_r\mathbf C_r^{n}p = \sum_{r=0}^{2d} c_r p_r,
\end{equation}
{
where $\mathbf C_r^{n}$ is the linear operator corresponding to $C_r^n$, and $p_r = \mathbf C_r^{n}p = \Tr_2(C_r^n(x,\sigma)(I_{2^n} \otimes p(\sigma)))$ contains all terms of $p$ with degree equal to $r$, since $C_r^n$ is the reproducing kernel for homogeneous polynomials of degree $r$.}
In particular, if all terms in $p$ are of even degree, then $p_{2r+1} = 0$ for every $r \geq 0$.

To satisfy property \eqref{eq:property1}, we need 
\[
c_0 = \mathbf K1  = 1.
\]

Define 
\[
\widehat C_r^{n} = \frac{C_r^{n}}{\|C_r^{n}\|^2}.
\]
Then by orthogonality of the kernels, we have
\[
\langle \widehat C_r^{n}, K\rangle = c_r.
\]

\subsection{Property \eqref{eq:property2}}
\label{sec:property2}
Suppose that the coefficients $c_r$ are such that
\[
f(i) = \sum_{r=0}^{2d} c_r (-1)^r \mathcal K_r^{n,4}(i)
\]
is a univariate sum of squares of degree $2d$.
Using equation~\eqref{eq:K_to_krawtchouk}, we then have that for any $p$ with $p(\sigma) \succeq 0$, 
\[
\mathbf Kp = \Tr_2(K(x,\sigma)(I_{2^n} \otimes p(\sigma))) = \Tr_2\left(I_{2^n} \otimes p(\sigma)^{\frac12} f\left(\sum_j A_j(x,\sigma)\right) I_{2^n} \otimes p(\sigma)^{\frac12}\right) 
\]
where 
\[
B^{\frac12} = \sum_i \sqrt{\lambda_i} v_iv_i^\dagger
\]
for a positive semidefinite matrix $B = \sum_i \lambda_i v_iv_i^\dagger$, and $A^\dagger$ denotes the conjugate transpose of $A$.
Given an orthonormal basis $\{u_i\}_i$ of $\C^{2^n}$, the partial trace can be evaluated as 
\[
\mathbf Kp
=
\sum_i
(I_{2^n}\otimes u_i^\dagger)
(I_{2^n}\otimes p(\sigma)^{\frac 12})
f\Big(\sum_j A_j(x,\sigma)\Big)
(I_{2^n}\otimes p(\sigma)^{\frac 12})
(I_{2^n}\otimes u_i),
\]
which is a sum of squares of degree $2d$ in the variables $x$.

\subsection{Property \eqref{eq:property3}}
By equation \eqref{eq:Kp_cpk}, we have that for a degree $k$ polynomial of even weight
\[
\mathbf K^{-1}p - p = \sum_{r=0}^{k/2} \left(\frac{1}{c_{2r}} -1\right) p_{2r}.
\]
The extremal eigenvalues of $\mathbf K^{-1}p-p$ are then bounded in absolute value by
\[
\sum_{r=0}^{k/2} \left| \frac{1}{c_{2r}} -1 \right| \max_{0 \leq r \leq k/2} \|p_{2r}\|_{\infty}.
\]

\subsubsection{Bounding $\max_r \|p_{2r}\|_\infty$}
\begin{lemma}\label{lem:component_bound}
    Suppose $H = \sum_{r=0}^k H_r$ is a $k$-local  Hamiltonian on $n$ qubits, with $H_r$ containing terms of weight exactly $r$. Then 
    \[
    \|H_{r}\|_{\infty} \leq \gamma_k\|H\|_{\infty},
    \]
    where the constant $\gamma_k$ is independent of $n$. In particular, if $p$ is a polynomial of degree $k$ in the Pauli matrices,
    \[
    \|p_{r}\|_{\infty} \leq \gamma_k\|p\|_{\infty}.
    \]
\end{lemma}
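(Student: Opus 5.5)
We will realize the projection $H\mapsto H_r$ onto the weight-$r$ component as an averaging operation: a convex combination of unitary conjugations, followed by a finite linear combination over a few parameter values. Since each step has controlled norm, the bound on $\|H_r\|_\infty$ follows.

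\textbf{Step 1: a one-parameter family of maps.} For $t\in[0,1]$, let $\Phi_t$ be the depolarizing-type channel acting on each qubit by
\[
\Phi_t^{(1)}(\rho)=t\rho+(1-t)\tfrac12\Tr_{\mathrm{unnorm}}(\rho)I_2 .
\]
This map fixes $I_2$ and scales each $\sigma_W$ by $t$. The tensor product $\Phi_t=(\Phi_t^{(1)})^{\otimes n}$ then multiplies a Pauli word of weight $r$ by $t^r$. The key observation is that $\Phi_t^{(1)}$ is a mixture of unitary conjugations. Indeed, $\tfrac12\Tr(\rho)I_2=\tfrac14\sum_{U\in\{I,X,Y,Z\}}U\rho U^*$, so $\Phi_t^{(1)}$ is a convex combination of conjugations by Paulis. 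Hence $\Phi_t$ is a convex combination of unitary conjugations on $n$ qubits, and therefore $\|\Phi_t(H)\|_\infty\le\|H\|_\infty$ for all $t\in[0,1]$, independently of $n$.

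\textbf{Step 2: interpolation.} We have $\Phi_t(H)=\sum_{r=0}^k t^rH_r$, which is a matrix-valued polynomial in $t$ of degree at most $k$. Fix nodes $t_0,\dots,t_k\in[0,1]$, for instance $t_j=j/k$ or Chebyshev nodes mapped to $[0,1]$. Lagrange interpolation, that is, inverting the Vandermonde system, gives
\[
H_r=\sum_{j}\beta_{r,j}\Phi_{t_j}(H),
\]
where the coefficients $\beta_{r,j}$ depend only on $k$ and $r$. By the triangle inequality,
\[
\|H_r\|_\infty\le\Big(\sum_j|\beta_{r,j}|\Big)\|H\|_\infty .
\]
We therefore set $\gamma_k=\max_r\sum_j|\beta_{r,j}|$. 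This equals the maximal $\ell^1$-norm of a row of the inverse Vandermonde matrix, equivalently the largest coefficient sum of the Lagrange basis polynomials in the monomial basis. It is independent of $n$. The statement for $p$ follows by evaluating at the Pauli matrices, since $p_r(\sigma)=H_r$.

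\textbf{Step 3: an explicit constant.} The main obstacle is getting a reasonable explicit $\gamma_k$. The paper's constant $C(k)$ contains a factor of $(1+\sqrt2)^{2k+1}$, which suggests the intended approach uses Chebyshev polynomials. We would use the classical Markov-type coefficient bound instead of a Vandermonde estimate: a real polynomial $g$ of degree $k$ with $|g|\le M$ on $[-1,1]$ has coefficients bounded by those of $T_k$, and these are at most $(1+\sqrt2)^k$ in size.

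To apply this, we extend $t$ to $[-1,1]$. This needs $\Phi_t$ to stay norm-contractive for negative $t$. For a single qubit, $t=-1/3$ is achieved by a mixture of $X,Y,Z$ conjugations, but $t=-1$ is not. So the workable route is to restrict to $t\in[-1/3,1]$ and rescale this interval affinely to $[-1,1]$, which inflates the constant exponentially in $k$. Since we only need independence from $n$, plain Lagrange interpolation on $[0,1]$ already proves the lemma. The sharper constant can be deferred, as the paper does, to the appendix.
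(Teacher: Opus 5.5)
Your proof is correct and follows the paper's argument. The paper likewise uses the noise operator $H(\alpha)=\sum_r \alpha^r H_r$, which is norm-contractive for $\alpha\in[0,1]$ because it is a convex combination of unitary conjugations (you make this explicit via the Pauli twirl, where the paper cites the literature), and then inverts the Vandermonde system at $k+1$ nodes in $[0,1]$ to obtain $\gamma_k=\max_r\sum_i|b_{r,i}|$, which is independent of $n$. Your Step 3 is not needed for the lemma: the paper obtains the explicit constant separately, in its appendix, by taking shifted Chebyshev nodes on $[0,1]$, which yields $\gamma_k<(1+\sqrt2)^{2k+1}$.
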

\begin{proof}
For $\alpha \in [0,1]$, define $H(\alpha) = \sum_{r=0}^k \alpha^r H_r$. The map $H \mapsto H(\alpha)$ is also known as the noise operator, and is completely positive for $\alpha \in [-1/3, 1]$ \cite[Section~8]{cj10-01}. Furthermore, as mentioned in \cite{cj10-01}, $\|H(\alpha)\|_{\infty} \leq \|H\|_{\infty}$ for $\alpha \in [0,1]$ because $H(\alpha)$ can be written as a convex combination of conjugations of $H$ by unitary matrices.

Suppose $0 \leq \alpha_0 < \dots < \alpha_k \leq 1$. Then we can express $H_r$ in terms of $H(\alpha_i)$ by
\[
H_r = \sum_{i=0}^k b_{r,i} H(\alpha_i),
\]
where the coefficients $b_{r,i}$ can be found by inverting the Vandermonde matrix $(\alpha_i^r)_{i,r=0}^k$. 

This implies the bound
\[
\|H_r\|_{\infty} \leq \sum_{i=0}^k |b_{r,i}| \|H(\alpha_i)\|_{\infty} \leq \sum_{i=0}^k |b_{r,i}| \|H\|_{\infty}.
\]
Note that, in particular, the coefficients $b_{r,i}$ are independent of the dimension $n$, so we may take 
$\gamma_k = \max_{0\le r\le k}\sum_{i=0}^k |b_{r,i}|$. 

For a polynomial $p$ whose variables correspond to Pauli matrices, the spectral norm is defined by
\[
\|p\|_\infty = \|p(\sigma)\|_{\infty}.
\]
Because $p_r(\sigma) = {\Tr_2(C_r^n(\sigma,\sigma)(I_{2^n} \otimes p(\sigma)))}= (p(\sigma))_r$, this implies
\[
\|p_r\|_\infty \leq \gamma_k \|p\|_\infty. \qedhere
\]
\end{proof}
In Appendix~\ref{app:bound_gamma_k}, we show the upper bound
\[
\gamma_k < (1+\sqrt{2})^{2k+1}
\]
by choosing a specific set of points $\{\alpha_i\}_i$.

\subsubsection{Bounding $\sum_{r=1}^{k/2}|1/c_{2r} - 1|$}\label{sec:bound_sum_c2r}
We first bound the sum by a quantity that is linear in $c_{2r}$.
{
Recall that to satisfy Property~\ref{eq:property2}, we require 
$f = \sum_{r=0}^{2d} c_r (-1)^r \mathcal K_r^{n,4}$ to be a sum-of-squares polynomial, thus nonnegative. This implies that all eigenvalues of $K$ are nonnegative. 
Combining this with Lemma~\ref{lem:nk1} and Lemma~\ref{lem:eigCrn}, we obtain}
\[
c_{2r} = \langle \widehat C_{2r}^{n}, K\rangle = \langle (-1)^{2r}\widehat{\mathcal K}_{2r}^{n,4}, \sum_j c_j (-1)^{j}\mathcal K_j^{n,4} \rangle_4 \leq \langle   \widehat{\mathcal K}_0^{n,4}, \sum_j c_j (-1)^j\mathcal K_j^{n,4}\rangle_4 = c_0 = 1. 
\]
Now if $c_{2r} \geq \frac{1}{\eta}$ {for some $\eta > 0$}, then 
\[
\sum_{r=1}^{k/2}\left|\frac{1}{c_{2r}} - 1\right| = \sum_{r=1}^{k/2}\frac{1}{c_{2r}} (1-c_{2r}) \leq \eta\sum_{r=1}^{k/2} (1-c_{2r}).
\]

The following lemma is stated for general $q$, but we will only use the $q=4$ case.

\begin{lemma}\label{lem:bound_coeffs}
There is a univariate sum-of-squares polynomial $f = \sum_{r=0}^{2d} c_{r} (-1)^r \mathcal K_r^{n,q}$ such that $c_0 = 1$ and
\[
\sum_{r=1}^{k/2} (1-c_{2r}) \leq \frac{qk(k+2)}{4(q-1)n} \xi_{d+1}^{n,q}.
\]
\end{lemma}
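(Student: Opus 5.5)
We follow the commutative binary-cube construction of \cite{slot_sum--squares_2023} and take $f$ to be a normalized square of a Christoffel--Darboux-type polynomial. Concretely, we choose $f = g^2$, where $g$ is a polynomial of degree $d$ in $i$ that concentrates at $i=0$. First we expand $f$ in the Krawtchouk basis and read off the coefficients. Since $f=\sum_r c_r(-1)^r\mathcal K_r^{n,q}$ and the $\mathcal K_r^{n,q}$ are orthogonal with $\|\mathcal K_r^{n,q}\|_q^2=\mathcal K_r^{n,q}(0)$, we get
\[
c_r(-1)^r=\langle \widehat{\mathcal K}_r^{n,q}, f\rangle_q .
\]
For even indices $2r$ this reads $c_{2r}=\langle \widehat{\mathcal K}_{2r}^{n,q}, f\rangle_q$, and $c_0=\langle 1,f\rangle_q$, so the normalization $c_0=1$ means that $f$ is a probability density against $w_q$. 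Hence
\[
1-c_{2r}=\sum_{i}\big(1-\widehat{\mathcal K}_{2r}^{n,q}(i)\big) f(i)w_q(i).
\]
Because $f\ge 0$, Lemma~\ref{lem:aff_bound_k} gives
\[
1-c_{2r}\le \frac{2qr}{(q-1)n}\sum_i i\, f(i)w_q(i) = \frac{2qr}{(q-1)n}\,\mathbb E_f[i].
\]
Summing over $r=1,\dots,k/2$ uses $\sum_{r=1}^{k/2} 2r = \frac{k}{2}\big(\frac k2+1\big)=\frac{k(k+2)}{4}$. This produces exactly the prefactor $\frac{qk(k+2)}{4(q-1)n}$. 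It therefore remains to find a sum of squares $f$ of degree $2d$ with $\langle 1,f\rangle_q=1$ and $\langle i, f\rangle_q\le \xi_{d+1}^{n,q}$.

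This last task is the classical extremal problem: minimize $\int i\, d\mu_f$ over densities $f=g^2$ with $\deg g\le d$, where $\mu_f = f\,w_q$. Its value is the smallest eigenvalue of the truncated Jacobi matrix (multiplication by $i$ on polynomials of degree $\le d$), and that eigenvalue equals the smallest root $\xi_{d+1}^{n,q}$ of $\mathcal K_{d+1}^{n,q}$. We make this explicit rather than citing the Rayleigh-quotient argument. Set
\[
g(i)=\frac{\mathcal K_{d+1}^{n,q}(i)}{i-\xi_{d+1}^{n,q}},
\]
which has degree $d$, and normalize $f=g^2/\langle g,g\rangle_q$. Then
\[
\langle (i-\xi)g,g\rangle_q=\langle \mathcal K_{d+1}^{n,q}, g\rangle_q=0
\]
by orthogonality, since $\deg g = d$. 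Therefore $\mathbb E_f[i]=\xi_{d+1}^{n,q}$ exactly. The condition $d+1\le n$ is needed for the Krawtchouk polynomial of degree $d+1$ to have $d+1$ distinct real roots in $(0,n)$. This holds, since the Krawtchouk polynomials are orthogonal with respect to a measure supported on $n+1$ points.

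Finally, $f$ has degree $2d\le n$ or so, so it lies in the span of $\mathcal K_0,\dots,\mathcal K_{2d}$. We assume $2d\le n$; otherwise we truncate, noting that the Krawtchouk functions on $\{0,\dots,n\}$ only span up to degree $n$, which only helps. So $f$ has the required form with real coefficients $c_r$, and $c_0=1$ by the normalization. The main point requiring care is the identification of the minimal first moment with the smallest root $\xi_{d+1}^{n,q}$; the explicit choice of $g$ above sidesteps the general theory. A secondary check is the sign convention: the $(-1)^r$ disappears for even indices, so the bound is unaffected. With $q=4$ this gives the $\frac{4k(k+2)}{12n}\xi_{d+1}^{n,4}$ bound needed downstream.
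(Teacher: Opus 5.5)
Your proof is correct and follows the paper's argument. You write $\sum_{r=1}^{k/2}(1-c_{2r})=\langle h,f\rangle_q$ with $h=\sum_{r=1}^{k/2}(1-\widehat{\mathcal K}_{2r}^{n,q})$, bound $h$ linearly in $i$ via Lemma~\ref{lem:aff_bound_k}, and take a normalized degree-$2d$ square with first moment $\xi_{d+1}^{n,q}$. The only difference is that the paper cites \cite[Theorem~12]{slot_sum--squares_2023} for the optimal value, whereas you exhibit the achieving density $g^2/\langle g,g\rangle_q$ with $g=\mathcal K_{d+1}^{n,q}/(i-\xi_{d+1}^{n,q})$ explicitly; since only achievability is needed, this makes the lemma self-contained.
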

\begin{proof}
{Let $\Sigma_{d}$ denote the cone of univariate sums-of-squares polynomials of degree at most $2d$.}
To find a bound, we want to solve
\begin{equation}
    \label{eq:sos_h}
\begin{aligned}
    & \min  && \langle h, f\rangle_{q} \\
    & \text{ s.t.} && \langle 1, f\rangle_q = 1 \\
    & && f \in \Sigma_{d}
\end{aligned}
\end{equation}
with $h = \frac{k}{2} - \sum_{r=1}^{k/2} \widehat{\mathcal K}_{2r}^{n,q}$. Since $\widehat{\mathcal K}_{2r}^{n,q}(i) \leq \widehat{\mathcal K}_{2r}^{n,q}(0) = 1$, $h$ is minimal at $0$ with $h(0) = 0$. 
Furthermore, if $f = \sum_{r=0}^{2d} c_{r} (-1)^r \mathcal K_{r}^{n,q} \in \Sigma_d$ with $c_0 = \langle 1, f\rangle_q = 1$, then $\langle h, f\rangle_q = \sum_{r=1}^{k/2} (1-c_{2r})$.

By Lemma~\ref{lem:aff_bound_k}, the function $h$ is upper bounded as
\[
h(i) \leq \sum_{r=1}^{k/2} \frac{2qr}{(q-1)n}i = \frac{qk(k+2)}{4(q-1)n} i.
\]
By \cite[Theorem~12]{slot_sum--squares_2023} (which combines results from \cite{de_klerk_worst-case_2020}), the optimal solution to \eqref{eq:sos_h} is attained for the linear function $h(i) = i$ and has optimal value equal to the smallest root $\xi_{d+1}^{n,q}$ of the Krawtchouk polynomial $\mathcal K_{d+1}^{n,q}$. This gives that there is an {$f \in \Sigma_d$ as above} whose coefficients satisfy 
\[
\sum_{r=1}^{k/2} |c_{2r} -1 | = \sum_{r=1}^{k/2}(1-c_{2r}) \leq \frac{qk(k+2)}{4(q-1)n}\xi_{d+1}^{n,q}.\qedhere
\]
\end{proof}
Note that by Lemma~\ref{lem:aff_bound_k}, we also have
\begin{equation}
    \label{eq:bound_c2r}
1-c_{2r} = \langle 1- \widehat{\mathcal K}_{2r}^{n,4}, f\rangle_4 \leq \frac{8r}{3n}\langle i, f\rangle_4 = \frac{8r}{3n}\xi_{d+1}^{n,4},
\end{equation}
for the polynomial $f \in \Sigma_d$ used in the proof of Lemma~\ref{lem:bound_coeffs}. Therefore the condition holds if
\[
1- \frac{8r}{3n}\xi_{d+1}^{n,4} \geq \frac{1}{\eta}
\]
for all $r$. Since the left-hand side is minimal for $r=k/2$, and a larger $\eta$ worsens the bound, the optimal $\eta$ is given by
\[
\eta = \left(1-\frac{4k}{3n}\xi_{d+1}^{n,4}\right)^{-1}.
\]
Thus if $1-\frac{4k}{3n}\xi_{d+1}^{n,4} > 0$, then we have the bound
\[
\sum_{r=1}^{k/2}\left|\frac{1}{c_{2r}}-1\right| \leq \eta \frac{k(k+2)}{3n}\xi_{d+1}^{n,
4}.
\]
For simplicity, Theorem~\ref{thm:main} uses $\eta = 2$ with the corresponding condition on $d$.

\begin{proof}[Proof of Theorem~\ref{thm:main}]
Choose the coefficients $c_r$ from Lemma~\ref{lem:bound_coeffs} applied with $q=4$. Then
$\mathbf K1=1$, and property \eqref{eq:property2} holds by the construction above.
By \eqref{eq:bound_c2r}, the assumption
\[
\frac{4k}{3}\frac{\xi_{d+1}^{n,4}}{n}\le \frac12
\]
implies $c_{2r}\ge 1/2$ for every $1\le r\le k/2$. Therefore
\[
\sum_{r=1}^{k/2}
\left|\frac1{c_{2r}}-1\right|
\le
2\sum_{r=1}^{k/2}(1-c_{2r})
\le
\frac{2k(k+2)}{3}
\frac{\xi_{d+1}^{n,4}}{n}.
\]
Using Lemma~\ref{lem:component_bound}, and the assumption
$\|p\|_\infty\le1$, we get
\[
\|\mathbf K^{-1}p-p\|_\infty
\le
\frac{2k(k+2)}{3}\gamma_k
\frac{\xi_{d+1}^{n,4}}{n}.
\]
The bound on $\gamma_k$ from Appendix~\ref{app:bound_gamma_k} gives the
stated constant $C(k)$. Lemma~\ref{lem:properties_K} then implies
\[
\lambda_{\min}(p)-\nu_d(p)
\le
C(k)\frac{\xi_{d+1}^{n,4}}{n}. \qedhere
\]
\end{proof}

\begin{remark}\label{rmk:finite_conv}
    When $d=n$, the problem \eqref{eq:sos_h} has optimal value $0$ for $h(i) = i$, which  recovers finite convergence at level $n$. To see this, take $f = g^2$, where we choose $g$ such that $g(i) = 0$ for $i \in \{1, \dots, n\}$ and $g(0) = 1$. Then $g$ is of degree $n$, and since $\langle 1, f\rangle_q = g(0) = 1$, it is a feasible solution with $\langle h, f\rangle_q = 0$.
\end{remark}

\begin{remark}
The proofs in \cite{fang_sum--squares_2021, slot_sum--squares_2023, slot_sum--squares_2024} use that
\[
 1-c_{r} \leq \sum_r |1-c_r| 
\]
to get a bound on the minimum degree required for the convergence rates for commutative polynomial optimization on the sphere, the binary cube, and the ball and the simplex, respectively. 
Using a bound on $c_r$ similar to \eqref{eq:bound_c2r} would improve the minimum level for which the convergence rates hold. For example, for the binary cube this improves the requirement on $d$ from $k(k+1)\xi_{d+1}^{n,2}/n \leq 1/2$ to $2k\xi_{d+1}^{n,2}/n \leq 1/2$, and for the ball and the simplex from $d \geq Cnk\sqrt{k}$ to $d \geq Cnk$, where in each case $n$ is the number of variables, $d$ the level of the hierarchy, and $k$ the degree of the polynomial to be minimized.
\end{remark}

\section{Proof of Theorem \ref{thm:second}}\label{sec:proof_thm2}
The dual of the hierarchy in \cite{klep2025upperboundhierarchiesnoncommutative} is given by  
\[
\begin{aligned}
  \mu_d(p)  & = \inf && \langle p, s\rangle, \\
    &\phantom{{}={}}\text{s.t.} && \langle 1, s\rangle = 1, \\
    &&& s \in \Sigma_{d}\langle x\rangle/\mathcal I.
\end{aligned}
\]
To give a convergence rate for this hierarchy, we use a similar idea as in \cite{slot_sum--squares_2024}, adapted to the noncommutative setting. We still require that $p$ is a noncommutative polynomial in the Pauli matrices with monomials of even weight, of even degree $k$.

Let $K$ be the kernel with coefficients $c_{r}$ given in Lemma~\ref{lem:bound_coeffs}, and let $\mathbf K$ denote the
associated linear operator on the Pauli polynomial space.
Let $v$ be a normalized eigenvector of $p(\sigma)$ corresponding to the smallest eigenvalue $\lambda_{\min}(p)$, and take \[
s = (v \otimes I_{2^n})^* K(\sigma, x) (v \otimes I_{2^n}),
\] 
i.e., we evaluate the first factor of the kernel on the Pauli matrices, while the second factor remains a polynomial in $x$. 
Since the univariate polynomial
associated to $K$ is a sum of squares, the same argument as in
Section~\ref{sec:property2} shows that
\[
s\in \Sigma_d\langle x\rangle/\mathcal I .
\]
Moreover, because $K$ reproduces constants,
\[
\langle 1,s\rangle
=
v^\dagger(\mathbf K1)(\sigma)v
=
v^\dagger v
=
1.
\]
Thus $s$ is feasible for the upper-bound hierarchy.

Using the defining property of the kernel,
\[
\langle p,s\rangle
=
v^\dagger(\mathbf Kp)(\sigma)v.
\]
Therefore
\[
\begin{aligned}
\langle p,s\rangle-\lambda_{\min}(p)
&=
v^\dagger(\mathbf Kp-p)(\sigma)v  \\
&\le
\|\mathbf Kp-p\|_\infty  \\
&\le
\sum_{r=1}^{k/2}
|c_{2r}-1|\,\max_{1\le r\le k/2}\|p_{2r}\|_\infty .
\end{aligned}
\]
By Lemma~\ref{lem:bound_coeffs} and Lemma~\ref{lem:component_bound},
\[
\langle p,s\rangle-\lambda_{\min}(p)
\le
\frac{k(k+2)}{3}\gamma_k
\frac{\xi_{d+1}^{n,4}}{n}
=
\frac{C(k)}{2}
\frac{\xi_{d+1}^{n,4}}{n}.
\]
Since $\mu_d(p)\le \langle p,s\rangle$, this proves
\[
\mu_d(p)-\lambda_{\min}(p)
\le
\frac{C(k)}{2}\frac{\xi_{d+1}^{n,4}}{n}.
\]

Unlike the lower-bound estimate in Theorem~\ref{thm:main}, this argument does
not require a lower bound on the coefficients $c_{2r}$, because no inverse
coefficients $1/c_{2r}$ appear. Hence the estimate holds for every relaxation
level $d$.

\section{Discussion and outlook}\label{sec:concl}
In this work, we provided the first convergence rates for hierarchies for noncommutative polynomial optimization when the variables represent Pauli matrices. We showed that for $k$-local Hamiltonians where each term acts on an even number of qubits, both the upper bound and the lower bound hierarchy have error at most\looseness=-1
\[
C(k) \frac{\xi_{d+1}^{n,4}}{n}
\]
in the $d$-th level of the hierarchy, where $C(k)$ does not depend on the number  $n$ of qubits or the level $d$ of the hierarchy.

When the polynomial $p$ only involves the variables $x_{\sigma_Z,i}$, the corresponding Hamiltonian is diagonal in the computational basis. In this case, spectral optimization reduces to binary polynomial optimization on $\{\pm1\}^n$. Indeed, evaluating $p$ on the matrices $\sigma_Z^j$ gives a diagonal matrix whose diagonal entries are $p(z)$, for $z\in\{\pm1\}^n$, and the trace inner product becomes 
\[\langle p_1,p_2\rangle = \frac{1}{2^n}\sum_{z\in\{\pm1\}^n}p_1(z)^*p_2(z). \] 
Thus the Pauli setting considered here contains the binary cube as a commutative special case.

A natural question is whether the same construction extends to qudit systems. A direct analogue of the kernels used here appears to lead to 
\[(1-d')^r\mathcal K_r^{n,q}\!\left(\sum_j A_j^{d'}\right), \qquad q=\frac{2d'}{d'-1}, \] 
for local dimension $d'$, where $A_j^{d'}$ is the qudit analog of the matrices $A_j = \frac{3}{4} I + \frac{1}{4}\sum_{W \in \{X, Y, Z\}} \sigma_W^j \otimes \sigma_W^j$. For kernels of the form $\sum_r c_r C_r^{n,d'}$ with nonnegative eigenvalues, this forces bounds of the form 
\[c_{2r}\le (d'-1)^{-2r}c_0. \] 
Thus, unlike in the qubit case, this direct construction cannot produce sum-of-squares kernels whose even coefficients $c_{2r}$ remain close to $1$. Extending the convergence-rate analysis to qudits therefore seems to require a different kernel construction.

Another interesting research direction is to exploit additional algebraic
structure. For example, for the quantum max-cut problem, the hierarchy can be highly simplified when one uses the swap matrices
\[
S_{ij} = \frac{1}{2}I_{2^n} + \frac{1}{2}\sum_{W \in \{X, Y, Z\}} \sigma_W^i \sigma_W^j.
\]
This then leads to a noncommutative polynomial optimization problem using a different ideal \cite{watts2024relaxations}. 
Although this symmetry-reduced hierarchy converges at level $\lceil n/2\rceil$ rather than level $n$, finite convergence alone does not imply better quantitative bounds at low levels. It would be interesting to obtain convergence rates that reflect this additional algebraic structure.

Sparsity is another important direction. Correlative sparsity can substantially reduce the size of the semidefinite programs~\cite{klep_sparse_2022}. At a fixed relaxation level, a sparse hierarchy may give weaker bounds than the dense hierarchy, but the reduced computational cost may allow one to reach higher levels. In the commutative setting, convergence rates for sparse sum-of-squares hierarchies were obtained in~\cite{korda2025convergence}. A natural next step is to develop analogous rates for sparse noncommutative hierarchies and compare the resulting accuracy as a function of computational cost.

Finally, the finite-dimensional nature of the Pauli setting is essential to our argument. The variables are represented by explicit matrices, which gives a canonical trace inner product and concrete reproducing kernels. This contrasts sharply with dimension-free noncommutative polynomial optimization, where one optimizes over Hilbert spaces of arbitrary dimension, operators on those spaces, and states. 
{Our use of reproducing kernels is related to the
noncommutative Christoffel--Darboux kernels of~\cite{ncchristoffel};
developing this connection further may provide a systematic way to construct
kernels for other structured noncommutative optimization problems.}
In full generality, effective convergence rates cannot be expected. Indeed, for Bell inequalities and nonlocal games, the identities $\mathrm{MIP}^{co}=\mathrm{coRE}$ \cite{Lin2025MIPco} and $\mathrm{MIP}^*=\mathrm{RE}$ \cite{MIP*} rule out uniform computable convergence rates for the corresponding SDP hierarchies: such rates would yield algorithms for undecidable approximation problems. 
Thus future convergence-rate results must rely on additional structure, such as fixed finite-dimensional representations, explicit matrix constraints, symmetry, sparsity, or other assumptions that make the feasible set quantitatively controllable.
Without such restrictions,
uniform computable convergence rates are ruled out by undecidability phenomena.\looseness=-1

\section*{Acknowledgments}
This work has been supported by European Union’s HORIZON-MSCA-2023-DN-JD programme under the Horizon Europe (HORIZON) Marie Skłodowska-Curie Actions, grant agreement 101120296 (TENORS), the project COMPUTE, funded within the QuantERA II Programme that has received funding from the EU's H2020 research and innovation programme under the GA No 101017733 \euflag. 
IK also acknowledges support of the Slovenian Research Agency program P1-0222 and grants J1-50002, N1-0217, J1-60011, J1-50001, J1-3004 and J1-60025. 
Partially supported by the Fondation de l’\'Ecole polytechnique
as part of the Gaspard Monge Visiting Professor Program. IK  thanks \'Ecole polytechnique and Inria Paris Saclay
for hospitality during the preparation of this manuscript.

\section*{Data availability}
No data were created or analyzed in this study.

 \appendix
 \section{Reduction to even-weight Hamiltonians}
\label{app:evenweight}
\begin{lemma}\label{lem:evenweight}
Let
\[
H= H_{\mathrm{odd}} + H_{\mathrm{even}} %cI+H_1+H_2
\]
be a $k$-local Hamiltonian on $n$ qubits, where
$H_{\mathrm{odd}}$ contains the terms with odd weight and 
$H_{\mathrm{even}}$ contains the terms with even weight.
Define a Hamiltonian on $n+1$ qubits by
\[
\widetilde H:= H_{\mathrm{odd}} \otimes \sigma_Z^0 + H_{\mathrm{even}}\otimes I_2,
\]
where $\sigma_Z^0$ acts on the ancilla qubit.

Then $\widetilde H$ is $k'$-local, where
$k'=k$ if $k$ is even, and $k'=k+1$ if $k$ is odd. Moreover, every monomial in $\widetilde H$ has even degree, and
\[
\spec(\widetilde H)=\spec(H)
\]
as sets of eigenvalues. In particular,
\[
\lambda_{\max}(\widetilde H)=\lambda_{\max}(H),
\qquad
\lambda_{\min}(\widetilde H)=\lambda_{\min}(H).
\]
\end{lemma}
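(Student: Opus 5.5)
The plan is to exploit the block structure of $\widetilde H$ with respect to the ancilla qubit. Since $\sigma_Z^0=\mathrm{Diag}(1,-1)$ is diagonal, $\widetilde H$ commutes with $I_{2^n}\otimes\sigma_Z^0$. Hence $\widetilde H$ decomposes as a direct sum over the two eigenspaces of the ancilla $Z$ operator:
\[
\widetilde H \cong (H_{\mathrm{even}}+H_{\mathrm{odd}})\oplus(H_{\mathrm{even}}-H_{\mathrm{odd}}) = H\oplus H',
\qquad H':=H_{\mathrm{even}}-H_{\mathrm{odd}} .
\]
So $\spec(\widetilde H)=\spec(H)\cup\spec(H')$, and the spectral claim reduces to showing that $H'$ is unitarily equivalent to $H$.

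\textbf{The key step} is to find a unitary $U$ on $n$ qubits with $U H_{\mathrm{even}}U^*=H_{\mathrm{even}}$ and $U H_{\mathrm{odd}}U^*=-H_{\mathrm{odd}}$. In other words, conjugation by $U$ should multiply every Pauli word of weight $w$ by $(-1)^w$. Single-qubit unitaries cannot flip all three Paulis, because $\sigma\mapsto-\sigma$ on $\{X,Y,Z\}$ is orientation reversing. So I would use an antiunitary map instead.

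Let $\mathcal C$ denote complex conjugation in the computational basis, and set
\[
\Theta=(\sigma_Y)^{\otimes n}\,\mathcal C .
\]
This is the time-reversal map. One checks that $\Theta\,\sigma_W\,\Theta^{-1}=-\sigma_W$ for each $W\in\{X,Y,Z\}$ on every qubit, since $\sigma_Y\overline{\sigma_W}\sigma_Y=-\sigma_W$. For complex coefficients we must also use $\Theta c\Theta^{-1}=\bar c$. The coefficients of $H$ in the Pauli basis are real because $H$ is Hermitian, so $\Theta H\Theta^{-1}=H_{\mathrm{even}}-H_{\mathrm{odd}}=H'$.

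An antiunitary conjugation preserves eigenvalues of Hermitian operators: if $Hv=\lambda v$ with $\lambda$ real, then $H'(\Theta v)=\lambda\,\Theta v$. Therefore $\spec(H')=\spec(H)$, and $\spec(\widetilde H)=\spec(H)$ as sets (multiplicities double). The statements about $\lambda_{\max}$ and $\lambda_{\min}$ follow immediately.

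\textbf{The locality and parity claims} are routine. Each odd-weight word $w$ becomes $w\otimes\sigma_Z^0$, of weight $|w|+1$, which is even and at most $k+1$. If $k$ is even, odd weights are at most $k-1$, so the new weight is at most $k$. Each even-weight word keeps its weight. Hence $\widetilde H$ is $k'$-local with $k'=k$ or $k+1$ as stated, and every monomial has even degree.

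\textbf{The main obstacle} is the sign-flip step, namely realizing that no unitary flips all Paulis while the antiunitary $\Theta$ does. Reality of the Pauli coefficients is what makes $\Theta$ act correctly on $H$.
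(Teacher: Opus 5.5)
Your proposal is correct and follows essentially the same route as the paper: the ancilla block decomposition $\widetilde H\cong (H_{\mathrm{even}}+H_{\mathrm{odd}})\oplus(H_{\mathrm{even}}-H_{\mathrm{odd}})$, then the antiunitary time-reversal map $\Theta=(\sigma_Y\kappa)^{\otimes n}$, which negates every single-qubit Pauli, with the reality of the Pauli coefficients of $H$ giving $\Theta H\Theta^{-1}=H_{\mathrm{even}}-H_{\mathrm{odd}}$. Your explicit locality count and your remark that $\lambda$ is real in the antilinear eigenvalue step fill in small details the paper leaves implicit.
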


\begin{proof}
By construction, the terms of $H_{\mathrm{even}}\otimes I_2$ are still of even degree, and every  term of odd degree
$\prod_{i \in J} \sigma^i_{\alpha_i}$ in $H_{\mathrm{odd}}$ is replaced by the even degree term $\prod_{i \in J}\sigma^i_{\alpha_i} \sigma_Z^0$.
Hence every monomial in $\widetilde H$ has even degree.

Since $\sigma_Z^0$ has eigenvalues $\pm1$, in the $\sigma_Z^0$-eigenbasis we have
\[
\widetilde H \cong (H_{\mathrm{even}} + H_{\mathrm{odd}}) \oplus (H_{\mathrm{even}} - H_{\mathrm{odd}}),
\]
{where the symbol ``$\cong$'' represents unitary equivalence.}
Thus it suffices to show that $H_{\mathrm{even}}-H_{\mathrm{odd}}$ and $H_{\mathrm{even}}+H_{\mathrm{odd}}$ have the same spectrum.

Let $\kappa$ denote entrywise complex conjugation on $\mathbb C^2$, and define
the antilinear operator
\[
J:=\sigma_Y\kappa
\]
on $\C^2$. 
Since
\[
\overline {\sigma_X}=\sigma_X,\qquad \overline {\sigma_Y}=-\sigma_Y,\qquad \overline {\sigma_Z}=\sigma_Z,
\]
a direct computation gives
\[
J\sigma_XJ^{-1}=\sigma_Y\overline {\sigma_X}\,\sigma_Y=-\sigma_X,\qquad
J\sigma_YJ^{-1}=\sigma_Y\overline {\sigma_Y}\,\sigma_Y=-\sigma_Y,\qquad
J\sigma_ZJ^{-1}=\sigma_Y\overline {\sigma_Z}\,\sigma_Y=-\sigma_Z.
\]
Now let
\[
\Theta:=J^{\otimes n},
\]
an antilinear isometry on $(\mathbb C^2)^{\otimes n}$.
If $w$ is a Pauli word of degree $m$, then
\[
\Theta\, w\, \Theta^{-1}=(-1)^m w.
\]
In particular, every odd-degree Pauli monomial changes sign, while every
even-degree Pauli monomial is fixed. 
Since $H$ is Hermitian, its coefficients in the Pauli-word basis are real. Therefore
\[
\Theta(H_{\mathrm{even}}+H_{\mathrm{odd}})\Theta^{-1}=H_{\mathrm{even}}-H_{\mathrm{odd}}.
\]

Since $\Theta$ is bijective, it preserves the spectrum:
if $(H_{\mathrm{even}}+H_{\mathrm{odd}})v=\lambda v$, then
\[
(H_{\mathrm{even}}-H_{\mathrm{odd}})\Theta v
=
\Theta(H_{\mathrm{even}}+H_{\mathrm{odd}})v
=
\lambda\,\Theta v.
\]
Hence
\[
\spec(H_{\mathrm{even}}-H_{\mathrm{odd}})=\spec(H_{\mathrm{even}}+H_{\mathrm{odd}})=\spec(H).
\]
Combining this with the block decomposition above, we obtain
\[
\spec(\widetilde H)=\spec(H),
\]
as claimed.
\end{proof}

\section{Proofs of Lemma~\ref{lem:nk1} and Lemma \ref{lem:aff_bound_k}}\label{app:proofs_lemmas}
For the proofs of the lemmas, we will use  the generating function of the Krawtchouk polynomials, which is given by
\[
\sum_{r=0}^n \mathcal K_r^{n,q}(i) z^r = (1-z)^i (1+(q-1)z)^{n-i}.
\] 
\begin{lemma}[Restatement of Lemma \ref{lem:nk1}]
For every $0 \le i \le n$ and $0 \le r \le n$,
\[
\mathcal K_r^{n,q}(i) \le \mathcal K_r^{n,q}(0)=(q-1)^r\binom{n}{r}.
\]
\end{lemma}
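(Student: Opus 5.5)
We would work directly from the generating function
\[
\sum_{r=0}^n \mathcal K_r^{n,q}(i) z^r = (1-z)^i (1+(q-1)z)^{n-i},
\]
and compare coefficients in a way that exhibits $\mathcal K_r^{n,q}(i)$ as bounded in absolute value by $\mathcal K_r^{n,q}(0)$. The first step is to record that at $i=0$ the generating function is $(1+(q-1)z)^n$, whose coefficient of $z^r$ is $(q-1)^r\binom nr$. This gives the equality $\mathcal K_r^{n,q}(0)=(q-1)^r\binom nr$; it also agrees with the explicit formula, since only the $j=0$ term survives when $i=0$.

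For the inequality we will prove the stronger statement $|\mathcal K_r^{n,q}(i)|\le \mathcal K_r^{n,q}(0)$. The plan is to dominate coefficientwise. Write $(1-z)^i = \sum_a \binom ia (-1)^a z^a$; its coefficients are bounded in absolute value by those of $(1+z)^i$. The product $(1+z)^i(1+(q-1)z)^{n-i}$ has nonnegative coefficients, so
\[
|\mathcal K_r^{n,q}(i)| \le [z^r]\,(1+z)^i(1+(q-1)z)^{n-i}.
\]
Since $q\ge 2$ we have $1\le q-1$, so $(1+z)^i$ is coefficientwise dominated by $(1+(q-1)z)^i$. Multiplying series with nonnegative coefficients preserves coefficientwise domination, which gives
\[
[z^r](1+z)^i(1+(q-1)z)^{n-i}\le [z^r](1+(q-1)z)^n=\mathcal K_r^{n,q}(0).
\]
This proves the lemma. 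Alternatively, one can argue directly: $\mathcal K_r^{n,q}(i)=\sum_{a+b=r}(-1)^a\binom ia\binom{n-i}{b}(q-1)^b$. Bounding each term by $\binom ia\binom{n-i}b(q-1)^{a+b}$ and applying Vandermonde's identity $\sum_{a+b=r}\binom ia\binom{n-i}b=\binom nr$ gives the same bound.

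The only point requiring care is the use of $q\ge 2$: the inequality $1\le (q-1)^a$ is what allows replacing $(1-z)$ by $(1+(q-1)z)$, and it is exactly the standing assumption in the definition of $w_q$. I do not anticipate a real obstacle here; the main thing to check is that the coefficient extraction from the generating function matches the paper's nonstandard normalization of $\mathcal K_r^{n,q}$. That identity is given at the start of the appendix, so we simply use it.
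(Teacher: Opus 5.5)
Your proposal is correct and follows essentially the paper's own argument: the paper also reads off $\mathcal K_r^{n,q}(i)$ from the generating function and applies the coefficientwise dominations $(1-z)^i \ll (1+z)^i \ll (1+(q-1)z)^i$, multiplied by the nonnegative series $(1+(q-1)z)^{n-i}$. Your version bounds $|\mathcal K_r^{n,q}(i)|$ rather than $\mathcal K_r^{n,q}(i)$; this is a harmless strengthening that the same chain yields for free.
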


\begin{proof}
We compare coefficients. First,
\[
(1-z)^i \ll (1+z)^i \ll (1+(q-1)z)^i,
\]
where $\ll$ denotes coefficientwise inequality. Hence
\[
(1-z)^i(1+(q-1)z)^{n-i}
\ll
(1+z)^i(1+(q-1)z)^{n-i}
\ll
(1+(q-1)z)^n.
\]
Taking the coefficient of $z^r$ {(denoted by $[z^r]$)} in the generating function gives
\[
\mathcal K_r^{n,q}(i)
=
[z^r]\bigl((1-z)^i(1+(q-1)z)^{n-i}\bigr)
\le
[z^r](1+(q-1)z)^n
=
(q-1)^r\binom{n}{r}.
\]
Since
\[
\mathcal K_r^{n,q}(0)=[z^r](1+(q-1)z)^n=(q-1)^r\binom{n}{r},
\]
the claim follows.
\end{proof}

\begin{lemma}[Restatement of Lemma~\ref{lem:aff_bound_k}]
For every $0 \le i \le n$ and $1 \le r \le n$,
\[
\mathcal K_r^{n,q}(0)-\mathcal K_r^{n,q}(i)
\le
q \cdot (q-1)^{r-1}\binom{n-1}{r-1}\, i.
\]
Equivalently, 
\[
1-\widehat{\mathcal K}_r^{n,q}(i)\le \frac{qr}{(q-1)n}\,i.
\]
\end{lemma}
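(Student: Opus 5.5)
We use the generating function again and compare coefficients, as in the proof of Lemma~\ref{lem:nk1}. Write
\[
\sum_{r=0}^n \bigl(\mathcal K_r^{n,q}(0)-\mathcal K_r^{n,q}(i)\bigr) z^r
= (1+(q-1)z)^{n} - (1-z)^i(1+(q-1)z)^{n-i}
= (1+(q-1)z)^{n-i}\Bigl[(1+(q-1)z)^i-(1-z)^i\Bigr].
\]
The bracket is a difference of $i$-th powers, so we factor it as
\[
(1+(q-1)z)^i-(1-z)^i = qz\sum_{j=0}^{i-1}(1+(q-1)z)^{j}(1-z)^{i-1-j},
\]
using $(1+(q-1)z)-(1-z)=qz$. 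The whole difference of generating functions therefore equals
\[
qz\sum_{j=0}^{i-1}(1+(q-1)z)^{n-i+j}(1-z)^{i-1-j}.
\]

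Next we bound each summand coefficientwise. Exactly as in Lemma~\ref{lem:nk1}, $(1-z)^{m}\ll(1+(q-1)z)^{m}$ for $q\ge2$. Hence each of the $i$ summands satisfies $(1+(q-1)z)^{n-i+j}(1-z)^{i-1-j}\ll(1+(q-1)z)^{n-1}$. The subtle point is that the left side has negative coefficients, but $\ll$ only requires the inequality $[z^m]\text{LHS}\le[z^m]\text{RHS}$, and this holds for each $m$. Coefficientwise inequality is preserved under multiplication by a series with nonnegative coefficients: all the intermediate factors $(1+(q-1)z)^a$ and $(1+z)^b$ have nonnegative coefficients, and we pass through $(1+z)^{m}$ as in the proof of Lemma~\ref{lem:nk1}. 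Multiplying by $qz$ and summing over $j$ gives
\[
(1+(q-1)z)^{n}-(1-z)^i(1+(q-1)z)^{n-i}\ll qi\,z(1+(q-1)z)^{n-1}.
\]

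Taking the coefficient of $z^r$ for $r\ge1$ yields
\[
\mathcal K_r^{n,q}(0)-\mathcal K_r^{n,q}(i)\le qi\,(q-1)^{r-1}\binom{n-1}{r-1},
\]
which is the first claim. For the equivalent normalized form, we divide by $\mathcal K_r^{n,q}(0)=(q-1)^r\binom nr$ and use $\binom{n-1}{r-1}/\binom nr=r/n$, giving $1-\widehat{\mathcal K}_r^{n,q}(i)\le \frac{qr}{(q-1)n}i$. The only step requiring care is the chaining of coefficientwise inequalities through factors with signed coefficients. We handle it by always multiplying the chain $(1-z)^m\ll(1+z)^m\ll(1+(q-1)z)^m$ by series with nonnegative coefficients.
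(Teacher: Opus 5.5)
Your proof is correct and takes essentially the same approach as the paper. Both use the generating function, factor $(1+(q-1)z)^i-(1-z)^i=qz\sum_{s}(1+(q-1)z)^{i-1-s}(1-z)^s$, bound each term via $(1-z)^s\ll(1+(q-1)z)^s$, and extract $[z^r]$. The only difference is cosmetic: you absorb $(1+(q-1)z)^{n-i}$ into each summand before bounding, and you state explicitly that multiplying by a series with nonnegative coefficients preserves $\ll$, a fact the paper uses without comment.
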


\begin{proof}
Using again the generating function,
\[
\begin{split}
\mathcal K_r^{n,q}(0)-\mathcal K_r^{n,q}(i)
& =
[z^r]\Bigl((1+(q-1)z)^n-(1-z)^i(1+(q-1)z)^{n-i}\Bigr) \\
& 
=
[z^r]\Bigl((1+(q-1)z)^{n-i}\bigl((1+(q-1)z)^i-(1-z)^i\bigr)\Bigr).
\end{split}
\]
Then
\[
\begin{split}
(1+(q-1)z)^i-(1-z)^i
& =
\bigl((1+(q-1)z)-(1-z)\bigr)
\sum_{s=0}^{i-1}(1+(q-1)z)^{i-1-s}(1-z)^s
\\
& =
qz\sum_{s=0}^{i-1}(1+(q-1)z)^{i-1-s}(1-z)^s.
\end{split}
\]
Since $(1-z)^s \ll (1+(q-1)z)^s$ coefficientwise, we get
\[
(1+(q-1)z)^i-(1-z)^i
\ll
qiz(1+(q-1)z)^{i-1}.
\]
Therefore
\[
\mathcal K_r^{n,q}(0)-\mathcal K_r^{n,q}(i)
\le
[z^r]\Bigl(qiz(1+(q-1)z)^{n-1}\Bigr)
=
qi\,[z^{r-1}](1+(q-1)z)^{n-1}=qi \cdot (q-1)^{r-1}\binom{n-1}{r-1},
\]
proving the first claim.

For the normalized version, simply divide by
\[
\mathcal K_r^{n,q}(0)=(q-1)^r\binom{n}{r}. \qedhere
\]
\end{proof}

\section{Upper bound on $\gamma_k$}\label{app:bound_gamma_k}
To find an explicit bound on $\gamma_k$, we choose the interpolation points $\alpha_i$ to be the Chebyshev nodes shifted to the interval $[0,1]$, which are the roots of the shifted Chebyshev polynomial of the first kind $T^*_{k+1}(x) = T_{k+1}(2x-1)$. 
Here the superscript  $*$ in $T_j^*$ is part of the standard notation for the
shifted Chebyshev polynomial and is unrelated to the involution on
$\C\langle x\rangle$ used elsewhere in the paper.

These polynomials satisfy the discrete orthogonality relation 
\[
\sum_{i=0}^{k} T_j^*(\alpha_i)T_l^*(\alpha_i) = \frac{k+1}{u_j}\delta_{jl}
\]
for $j,l \leq k$, where $u_j = 1$ if $j = 0$ and $u_j = 2$ otherwise.

Recall that we need to express $H_r$ from the expression
\[
H(\alpha)= \sum_r H_r\alpha^r
\]
in terms of the evaluations on $\alpha_i$. That is, we require the coefficients $b_{r,i}$ such that
\[
H_r = \sum_{i=0}^k b_{r,i} H(\alpha_i),
\]
which then give the bound $\gamma_k \leq \max_{r} \sum_{i=0}^k |b_{r,i}|$.

The discrete orthogonality gives the expression
\[
H(\alpha) = \sum_{j=0}^k \frac{u_j}{k+1}(\sum_{i=0}^{k} H(\alpha_i) T_j^*(\alpha_i) )T_j^*(\alpha), 
\]
and hence
\[
H(\alpha) = \sum_{j=0}^k \frac{u_j}{k+1}\sum_{i=0}^k H(\alpha_i) T_j^*(\alpha_i) \sum_{r=0}^j \tau_{r, j} \alpha^r
\]
where $\tau_{r,j}$ is the coefficient of the degree $r$ term of $T_j^*$.
Thus the coefficients $b_{r,i}$ are given by
\[
b_{r,i} = \sum_{j=r}^k\frac{u_j}{k+1}T^*_j(\alpha_i) \tau_{r,j}.
\]
This gives the bound
\[
\gamma_k \leq \sum_{r,i} |b_{r,i}| \leq 2\sum_{j=0}^k \sum_{r=0}^j |\tau_{r,j}|.
\]

Because all roots of the shifted Chebyshev polynomial $T_j^*(x)$ are positive, its coefficients must alternate in sign. Furthermore, the leading coefficient of $T_j(2x-1)$ is strictly positive for $j \ge 1$. This implies that $\text{sgn}(\tau_{r,j}) = (-1)^{j-r}$.

This sign alternation allows us to write the sum of the absolute values of the coefficients as an evaluation of the polynomial itself:
\[
    \sum_{r=0}^j |\tau_{r,j}| = \sum_{r=0}^j \tau_{r,j} (-1)^{j-r} = (-1)^j \sum_{r=0}^j \tau_{r,j} (-1)^r = (-1)^j T_j^*(-1).
\]

Using the parity property of the standard Chebyshev polynomials, $T_j(-x) = (-1)^j T_j(x)$, we simplify this expression to:
\[
    (-1)^j T_j^*(-1) = (-1)^j T_j(-3) = T_j(3).
\]

Substituting this back into our initial bound for $\gamma_k$ yields the summation:
\[
  \gamma_k \le 2\sum_{j=0}^k T_j(3).
\]

For $x \ge 1$, the Chebyshev polynomials of the first kind can be evaluated using the closed-form expression $T_j(x) = \frac{1}{2}\left((x + \sqrt{x^2-1})^j + (x - \sqrt{x^2-1})^j\right)$. Thus
\[
    T_j(3) = \frac{1}{2}\left((3 + 2\sqrt{2})^j + (3 - 2\sqrt{2})^j\right) < (3 + 2\sqrt{2})^j.
\]

We can now cleanly bound the sum by evaluating the geometric series of the dominant term:
\begin{align*}
    \sum_{j=0}^k T_j(3) &< \sum_{j=0}^k (3 + 2\sqrt{2})^j \\
    &= \frac{(3 + 2\sqrt{2})^{k+1} - 1}{(3 + 2\sqrt{2}) - 1}  \\
    &< \frac{(3 + 2\sqrt{2})^{k+1}}{2(1+\sqrt{2})} \\
    & =  \frac{1}{2} \left(1+\sqrt{2}\right) \left(3+2
   \sqrt{2}\right)^k
\end{align*}

Therefore, we obtain the upper bound:
\[
    \gamma_k <  \left(1+\sqrt{2}\right) \left(3+2
   \sqrt{2}\right)^k = \left(1+\sqrt{2}\right)^{2k+1}.
\]

%\bibliography{refs}
%\bibliographystyle{unsrt}

\end{document}